\newcommand{\minH}{H}
\newcommand{\minh}{h}
\newcommand{\ind}{\mathbbm{1}}
\newcommand{\smat}[4]{[\begin{smallmatrix}
#1 & #2\\
#3 & #4
\end{smallmatrix}]}
\DeclareMathOperator*{\argmin}{arg\,min}
\newtheorem{conj}{Conjecture}
\newtheorem{thm}{Theorem}
\newcommand{\vcenteredinclude}[1]{\begingroup
\setbox0=\hbox{#1}%
\parbox{\wd0}{\box0}\endgroup}
\title{(Dis)assortative Partitions on Random Regular Graphs}
\author[1]{Freya Behrens}
\author[2]{Gabriel Arpino}
\author[3]{Yaroslav Kivva}
\author[1]{Lenka Zdeborová}
\affil[1]{{\small Statistical Physics of Computation Lab, École Polytechnique Fédérale de Lausanne (EPFL). }}
\affil[2]{{\small Department of Engineering, University of Cambridge. Invenia Labs.}}
\affil[3]{{\small Business Analytics Lab, École Polytechnique Fédérale de Lausanne (EPFL). }}
\date{}
\begin{document}
\maketitle

\begin{abstract}
We study the problem of assortative and disassortative partitions on random \texorpdfstring{$d$}{d}-regular graphs. Nodes in the graph are partitioned into two non-empty groups.
In the assortative partition every node requires at least \texorpdfstring{$H$}{H} of their neighbors to be in their own group.
In the disassortative partition they require less than \texorpdfstring{$H$}{H} neighbors to be in their own group. Using the cavity method based on analysis of the Belief Propagation algorithm we establish for which combinations of parameters \texorpdfstring{$(d,H)$}{(d,H)} these partitions exist with high probability and for which they do not. 
For \texorpdfstring{$H>\lceil \frac{d}{2} \rceil $}{H > ceil(d/2)} we establish that the structure of solutions to the assortative partition problems corresponds to the so-called frozen-1RSB.
This entails a conjecture of algorithmic hardness of finding these partitions efficiently. For \texorpdfstring{$H \le \lceil \frac{d}{2} \rceil $}{H <= ceil(d/2)} we argue that the assortative partition problem is algorithmically easy on average for all \texorpdfstring{$d$}{d}. Further we provide arguments about asymptotic equivalence between the assortative partition problem and the disassortative one, going trough a close relation to the problem of single-spin-flip-stable states in spin glasses. In the context of spin glasses, our results on algorithmic hardness imply a conjecture that gapped single spin flip stable states are hard to find which may be a universal reason behind the observation that physical dynamics in glassy systems display convergence to marginal stability. 
\end{abstract}

\section{Introduction}
\begin{figure}[b]
    \centering
    \includegraphics[width=0.45\textwidth]{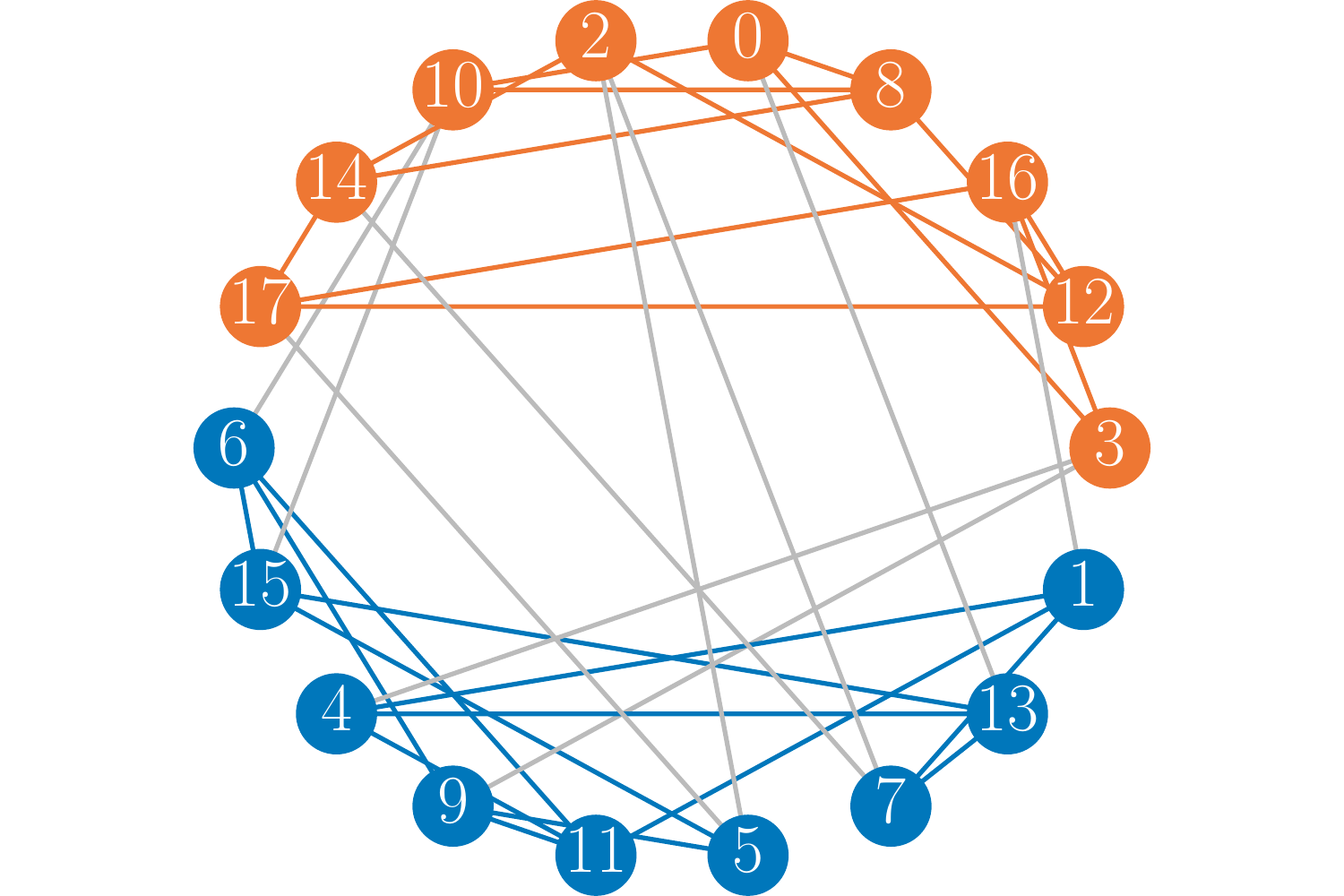}
    \includegraphics[width=0.45\textwidth]{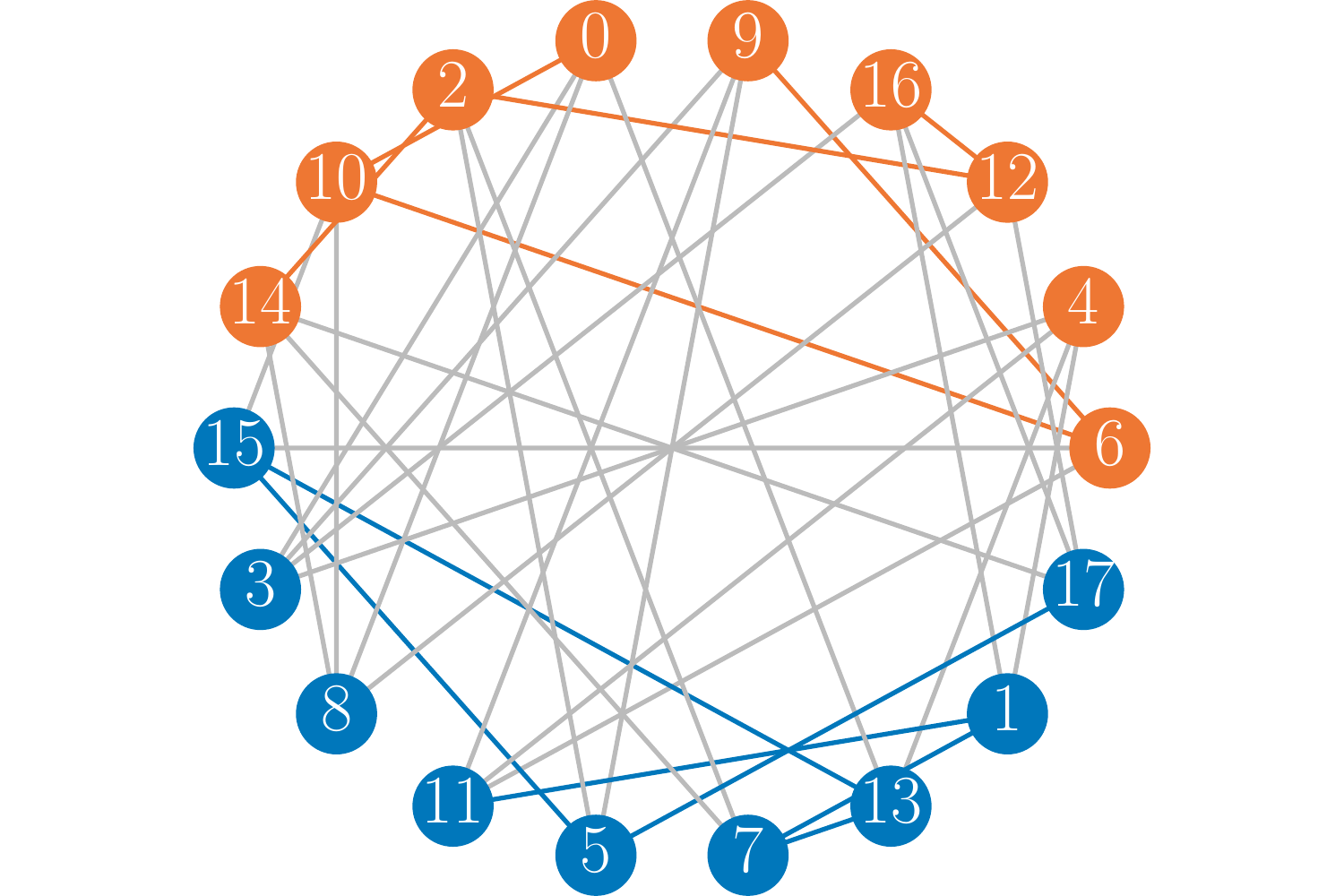}
    \caption{\textit{A $4$-regular graph with $18$ nodes.} \textit{(Left)} Example of a $2$-assortative bipartition, where every nodes has at least two of their neighbors in their own group. \textit{(Right)} Example of a $3$-disassortative bipartition, where every node has less than 3 of their neighbors in their own group, i.e. at least two in the opposite group.}
    \label{fig:toy-example}
\end{figure}

Let $G(V,E)$ be a simple undirected graph with $n$ nodes $i = \{1,...,n\}$ and edges $(i,j) \in E \subseteq V^2$.
We consider a partition of the nodes $V$ into two disjoint non-empty sets $\langle V_-,V_+ \rangle$.
Let the  \textit{magnetization} of the partition be $m = \frac{|V_+|-|V_-|}{|V|} \in [-1,+1]$ and a \textit{bisection} be a partition with $m=0$.
For the parameter $H$ we call $i \in V$ an \textit{assortative} node, if at least $H$ of its neighbors are in $i$'s own group.
Conversely, if less than $H$ neighbors are in its own group, the node $i$ is called \textit{disassortative}.

In this paper we study partitions under which the nodes in $V$ are either all assortative or all disassortative with respect to a given $H$, as exemplified in Figure~\ref{fig:toy-example}.
We call these two complementary problems $H$\textit{-assortative partition} and $H$\textit{-disassortative partition}.
When as an additional constraint the magnetization is fixed to some value $m$ we consider it as $(H,m)$-(dis)assortative partition.
For the special case of balanced groups with a fixed $m=0$ we simply refer to it as $H$-(dis)assortative \textit{bipartition}.

An early illustrative example of the assortative partition problem was posed in the context of game theory.
A group of people with alliance relations is divided into two subgroups where everyone wants to have more allies in its own group than in the other \cite{kristiansenAlliancesGraphs2004}.
If every person has exactly $d$ ally relations, a $\lceil\frac{d}{2}\rceil$-assortative partition of an ally graph is a Nash Equilibrium of the game.
In this case, every person has at least half of their allies in their own group and consequently nobody has an incentive to change their group unilaterally.

Similar constrained partition problems have emerged in different practical contexts, for example in social choice, community detection or process control (see \cite{bazganSatisfactoryGraphPartition2010} for a survey).
This broad variety of origins created an abundance of terminology in the field.
Therefore, results on assortative partitions can be found under the name of \textit{cohesive subsets}~\cite{morrisContagion2000}, \textit{satisfactory partitions}~\cite{bazganSatisfactoryGraphPartition2010}, \textit{friendly partitions}~\cite{ferberFriendlyBisectionsRandom2021a}, \textit{generalized matching cuts}~\cite{gomesFindingCutsBounded2021} and \textit{local minimum bisection}~\cite{alaouiLocalAlgorithmsMaximum2021}.
Likewise, disassortative partitions relate to \textit{co-satisfactory partitions}~\cite{bazganSatisfactoryGraphPartition2010}, \textit{unfriendly partitions}~ \cite{ferberFriendlyBisectionsRandom2021a} and \textit{local maximum cut}~\cite{alaouiLocalAlgorithmsMaximum2021}.

In statistical physics, the problem of $\lceil\frac{d}{2}\rceil$-assortative partitions has been investigated due to its relation to \textit{single-spin flip stable states} in the ferromagnetic Ising model. The $\lceil\frac{d}{2}\rceil$-disassortative partitions correspond to single-spin flip stable states in the anti-ferromagnetic Ising model. The spin glass version of the single-spin flip stable states was also widely considered \cite{brayMetastableStatesInternal1981}. The assortative partition problem also has analogies with the stable states in Hopfield networks \cite{hopfieldNeuralNetworksPhysical1982}.

Many canonical combinatorial problems, such as graph coloring or independent sets, have been studied on random graphs and this line of work often unveiled an intriguing geometric and algorithmic behavior. 
In particular, analysing the typical cases of these problems is interesting to reach a better understanding of heuristic algorithms.
To this end, it is customary to analyse ensembles of random graphs such as Erdős–Rényi or random $d$-regular graphs.
The latter are drawn uniformly from the set of graphs with size $n$ where every node has degree $d$ and $dn$ is even. In the limit where $n\to \infty$ and $d$ is a constant these graphs are locally tree-like,  which is a key property underlying the applicability of a range of probabilistic methods \cite{mezard2009information}. Understanding the structure of the solution space of (dis)assortative partitions on random $d$-regular graphs, associated algorithmic consequences, and the relation between the two problems is the focus of this work.

\section{Related Work}

\paragraph{Existence of (dis)assortative partitions and computational complexity}
There is a pletho\-ra of work that considers the algorithmic complexity of finding (dis)assortative (bi)partitions or even just deciding whether they exist on general non-random graphs.
We present an overview of these results in Figure~\ref{fig:related-work}.

For general, i.e. non-random, $d$-regular graphs and small $H \leq \lceil \frac{d}{2} \rceil - 1$ the question whether assortative partitions always exist was answered positively by \cite{stiebitzDecomposingGraphsDegree1996}: They can also be found in polynomial time, albeit the same does not hold true for a given fixed magnetization~\cite{bazganExistenceDeterminationSatisfactory2003}.
Naturally, this result directly transfers to random $d$-regular graphs.

For larger $H = \lceil \frac{d}{2} \rceil$ a long standing unresolved conjecture states that such a $H$-assortative partition always exists on $d$-regular graphs of size at least $n_d$~\cite{banInternalPartitionsRegular2013}.
For small $d<7$ the conjecture was proven by \cite{shafiqueSatisfactoryPartitioningGraphs2002,banInternalPartitionsRegular2013}, except for $d=5$.
Moreover, it was later shown that in the asymptotic limit of $d$, for even-degree regular graphs there is always a $\lceil\frac{d}{2}\rceil$-assortative partition \cite{linialAsymptoticallyAlmostEvery2017}.
In fact, for larger $\lceil \frac{d}{2} \rceil + 1 \leq H \leq d-1$ deciding  whether a $H$-assortative partition exists on $d$-regular graphs was shown to be NP-hard when $d$ is even by \cite{gomesFindingCutsBounded2021}.

On the other hand, when in addition to $H= \lceil \frac{d}{2} \rceil$ the magnetization is fixed to some $m$ it can be shown that such a partition does not exist on infinitely many $d$-regular graphs.
For $m=0$ and $d=3$ a simple counterexample is the wheel graph, a cycle graph where all nodes are additionally connected to their diametrically opposed neighbor.
In this graph a $\lceil\frac{d}{2}\rceil$-assortative bipartition does not exist for arbitrarily large~$n$ \cite{bazganSatisfactoryGraphPartition2010}.
When no satisfying partition exists for $m=0$ and $H=\lceil \frac{d}{2} \rceil$, there are no polynomial time approximation schemes for the minimum number of disassortative nodes of a balanced partition or the maximum unbalance of a assortative partition \cite{bazganSatisfactoryGraphPartition2010}.

Another relevant line of work on the existence of assortative partitions examines which subgraphs must be forbidden in order for an $\lceil\frac{d}{2} \rceil$-assortative partition to exist \cite{stiebitzDecomposingGraphsDegree1996,gerberAlgorithmicApproachSatisfactory2000,maDecomposingC4freeGraphs2019,liuConjectureSchweserStiebitz2021}.
For example, a sufficient criterion is for cycles of length 4 to be excluded from the graph \cite{maDecomposingC4freeGraphs2019}.
Since random regular graphs are locally tree-like and contain few short cycles \cite{mckayShortCyclesRandom2004}, this result already indicates what we will find later, that most random graphs have $\lceil d/2 \rceil$-partitions that are easy to find.

For example, there exists an algorithm for $\lceil\frac{d}{2} \rceil$-assortative partitions on dense ER graphs that provably finds them with high probability \cite{ferberFriendlyBisectionsRandom2021a}.
This algorithm uses a greedy node swapping operation to iteratively approach a solution.
Similar algorithms using node flipping or swapping operations were analysed in the framework of the complexity class of polynomial local search (PLS) \cite{christopoulosOverviewWhatWe2004}.
This class includes problems for which one local step in the solution space, i.e. finding direct neighbors and computing their costs, takes polynomial time in the worst case.
Examples from this class are local Maximum Cut \cite{angelLocalMaxcutSmoothed2017} and finding stable configurations in Hopfield networks  \cite{schafferSimpleLocalSearch1991} which relate closely to weighted versions of our problem as discussed in the remainder of this section. Negative results about local search algorithms were proven using the so-called overlap gap property in a closely related class of max-cut problems on hyper-graphs \cite{chen2019suboptimality,gamarnik2021overlap}, as we will see our negative algorithmic results are closely linked to this property as well. 

\begin{figure}[t!]
    \centering
    \includegraphics[width=\textwidth]{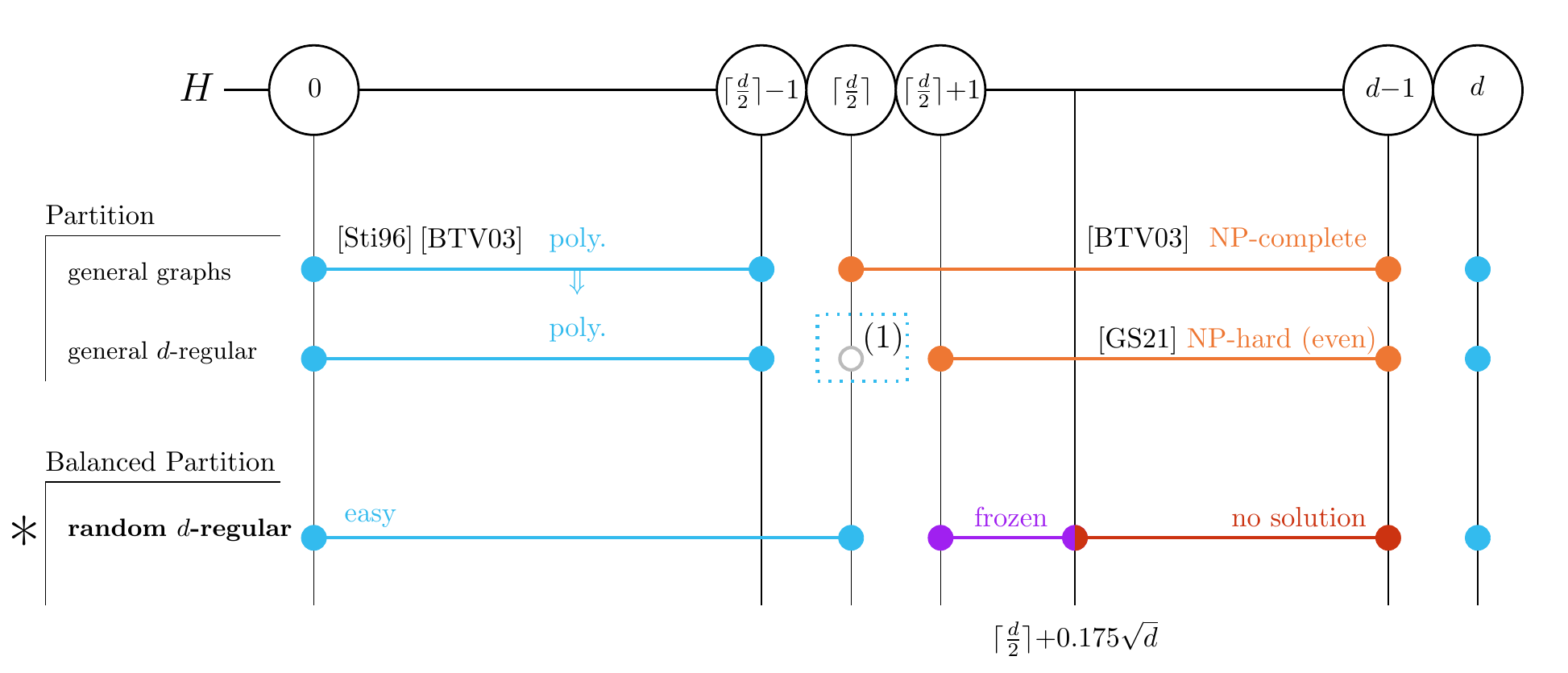}
    \caption{\textit{Overview of complexity results on assortative partitions.} 
    We compare the complexity of deciding whether an $H$-assortative partition exists for different ensembles of general and random graphs.
    For graphs with differing degrees per node, the degree $d$ should be interpreted as a function of each nodes degree $d(v)$.
    Therefore, $H$ also depends on a specific node.
    The ``?'' marks that to our knowledge there are no results on these sub-cases known in the literature.
    \textit{(1)} refers to a long-standing conjecture  that only finitely many regular graphs exist for which there are no $\lceil\frac{d}{2}\rceil$-assortative partitions.
    The * marks the results of this work.}
    \label{fig:related-work}
\end{figure}

\paragraph{Relation to local maximum cut, local minimum bisection and spin glasses}
The (maximum) minimum bisection is the bipartition of $G$ which (maximizes) minimizes the number of \textit{crossing} edges, i.e. edges with endpoints in both $V_-$ and $V_+$.
These classical problems are naturally related to the (anti-)ferromagnetic Ising model at zero magnetization,
since the ground state energy of its Hamiltonian exactly gives the size of the optimal bisection.

Locally maximal and minimal bisections were studied in~\cite{angelLocalMaxcutSmoothed2017,chenSmoothedComplexityLocal2019}.
In fact, any assortative bipartition for $H = \lceil d/2 \rceil$ is a locally minimal solution to the minimal bisection problem, as no node can be flipped to decrease the bisection size.
Analogously, a disassortative (bi)partition for $H = \lceil d/2 \rceil - 1$ is a locally maximal solution to the maximum (bisection)cut problem.
In fact, one can obtain a lower bound on the size of the maximum cut on sparse graphs using the assortative/disassortative properties of the global optimum  \cite{gamarnikMaxcutSparseRandom2018}.

(Dis)assortative partitions also correspond to single-spin flip stable state in the (anti-)ferro\-magnetic Ising model.
Using replica computations, the number of these states was computed for spin glasses on fully connected graphs by~\cite{brayMetastableStatesInternal1981}.
Their result is immediately meaningful for the solution space of the $\lceil d/2 \rceil$-assortative bipartitions and in this work we recover their results by means of a large $d$ expansion of the belief propagation equations as we will see in Section~\ref{sec:large-d-derivation}.

A striking relation between the ground state energies of the maximum and minimum bisection, i.e. the number of crossing edges of the globally optimal solutions, was discovered for  on random $d$-regular graphs.
For random $d$-regular graphs and large $n$, the cavity method from statistical physics predicts that the sizes of the minimum and maximum bisection are closely related with high probability \cite{zdeborovaConjectureMaximumCut2010}.
The same was predicted for the size of the maximum bisection and maximum cut, i.e. the ground state energy of the anti-ferromagnetic Ising model at arbitrary magnetization.
This conjecture was proven to be true asymptotically in $d$ in the first order and with high probability by \cite{demboExtremalCutsSparse2017}.
In this work, we make a similar connection between the entropy of the locally extremal solutions of both problems.

\paragraph{Hopfield Networks}
The existence of $\lceil \frac{d}{2} \rceil$-assortative partitions has sometimes been related to the stored patterns in a Hopfield network~\cite{bazganSatisfactoryGraphPartition2010}.
A Hopfield network is a model of associative memory \cite{hopfieldNeuralNetworksPhysical1982}.
Given an input pattern, among a collection of stored patterns, it retrieves the most similar one.
Technically, this network is a graph with $n$ nodes, where $n$ is the dimension of one such binary pattern.
By initializing the nodes with the input's values, one can retrieve the most similar pattern according to an asynchronous local update rule:
A node $i$ updates its own value to be in accordance with the majority of its neighbors values, weighted according to the edges connecting them to $i$.
As long as the number of patterns does not exceed the capacity of the network, one can ensure that stored patterns are both \textit{stable} under this local update rule and an \textit{attractors} of sufficiently associated inputs.
Therefore, a $\lceil \frac{d}{2} \rceil$-assortative partition resembles a stable state in a Hopfield network.

Note, however, that an absolutely key element of Hopfield networks is a very specific choice of the edge weights that are related to the patterns we aim to store, most commonly given by the Hebbian learning rule. In the assortative partition problem there is no such special choice of the interaction weights and thus one should not really think of the assortative partitions as the stored patterns in a Hopfield network. 

Another relation between assortative-partitions and Hopfield networks has been discussed in \cite{trevesMetastableStatesAsymmetrically1988a}.
The authors there computed the number of maximally stable states, i.e. states having the largest possible local magnetic field. Translated to our notation they concluded that stable states exist up to $H = \frac{d}{2} + 0.17566\sqrt{d}$, which exactly corresponds to our results at large $d$. They subsequently argue that the more stable the states are, the larger their basin of attraction and the more likely they should be discovered by local dynamics (such as the Hopfield flipping rule described above). As we will see this line of reasoning is flawed and the exact opposite is true, the larger the $H$ the harder it is computationally to find these more stable assortative partitions. This is another reason why the assortative partitions on a graph with random interactions are actually of limited interested in the context of the Hopfield networks.

\section{Main results}

We find several new results about both the assortative and disassortative partitions on large random $d$-regular graphs. We describe here those that we find the most remarkable and leave the full account of our results to the body of the paper.

\paragraph{Existence of assortative partitions}

Table \ref{tab:existence} summarizes our results on the existence of non-empty assortative \textit{bi}partitions in random $d$-regular graphs. The numbers given in the table are the largest $H$ for which these $H$-assortative balanced partitions exist with high probability. The middle (resp. left) column then gives the range of odd (resp. even) degrees $d$ for which the corresponding $H$ is the largest one. Moreover, we find that when non-empty assortative partitions exist most of them have $m=0$ in the limit $n\to \infty$.

Therefore we conjecture that the existence results for the bipartitions are valid for non-empty general partitions as well. 

We establish these results using non-rigorous tools from statistical physics. Their rigorous proofs would be an interesting topic for future work. We also conclude and conjecture that the entropy (logarithm of the number) of those partitions in the $n\to \infty$ limit (or regime) is given by the results of Belief Propagation. 
\begin{table}
    \centering
    \begin{tabular}{ l|c |c }
 \toprule
 largest achievable $H$ & odd degrees & even degrees  \\ 
 \midrule $H = \lceil d/2 \rceil$ & $3 \le d \le 29$ & $2 \le d \le 6$ \\  
 \hline $H = \lceil d/2 \rceil+1$ & $31 \le d \le 127$ & $8 \le d \le 70$  \\ 
 \hline $H = \lceil d/2 \rceil +2$ & $129 \le d \le 289$ & $72 \le d \le 200$  \\
 \hline $H = \lceil d/2 \rceil +3$ & $291 \le d \le 515$ & $202 \le d \le 394$  \\
 \hline $H = \lceil d/2 \rceil +4$ & $517 \le d \le 807$ & $396 \le d \le 654$  \\
 \hline $H = \lceil d/2 \rceil + 0.17566 \sqrt{d}$ &  $d \to \infty$ &  $d \to \infty$\\
 \bottomrule
\end{tabular}
\caption{Largest $H$-assortative bipartitions that exist with high probability in random $d$-regular graphs. All those with $H \ge \lceil d/2 \rceil+1$ are frozen 1RSB and hence conjectured to be algorithmically hard.}
    \label{tab:existence}
\end{table}

\paragraph{Algorithmic results} 

Our main algorithmic result is that for $H \le \lceil d/2 \rceil$ it is easy to find assortative partitions and bisections in random $d$-regular graphs. We indeed provide two algorithms for which we verify this numerically.

For $H > \lceil d/2 \rceil$ we conclude and conjecture that finding $H$-assortative partitions in random $d$-regular graphs is computationally hard. For this range of parameters the $H$-assortative partitions are organized in a way that is referred to in the literature as frozen-1RSB, implying that for a typical partition the closest one next to it (in Hamming distance) is an extensive distance far away. Other computational problems with frozen-1RSB, such as locked constraint satisfaction problems \cite{zdeborova2008locked,zdeborovaConstraintSatisfactionProblems2008}, binary perceptron \cite{krauthStorageCapacityMemory1989,aubinStorageCapacitySymmetric2019,perkinsFrozenRSBStructure2021}, or multi-index matching problem \cite{martinFrozenGlassPhase2004,martinRandomMultiindexMatching2005}, have been conjectured to be computationally hard. The hardness has been shown in a line of work on the overlap gap property for the class of local algorithms \cite{gamarnik2021overlap}. The structure of solutions in the frozen-1RSB phase is sketched in Figure~\ref{fig:phases}. 

Of course, it will not be easy to prove such a hardness result for a general algorithm. But clearly establishing the conjectured hardness for particular classes of algorithms is an interesting topic for future work. 

\begin{figure}[ht!]
    \centering
    \includegraphics[width=0.6\textwidth]{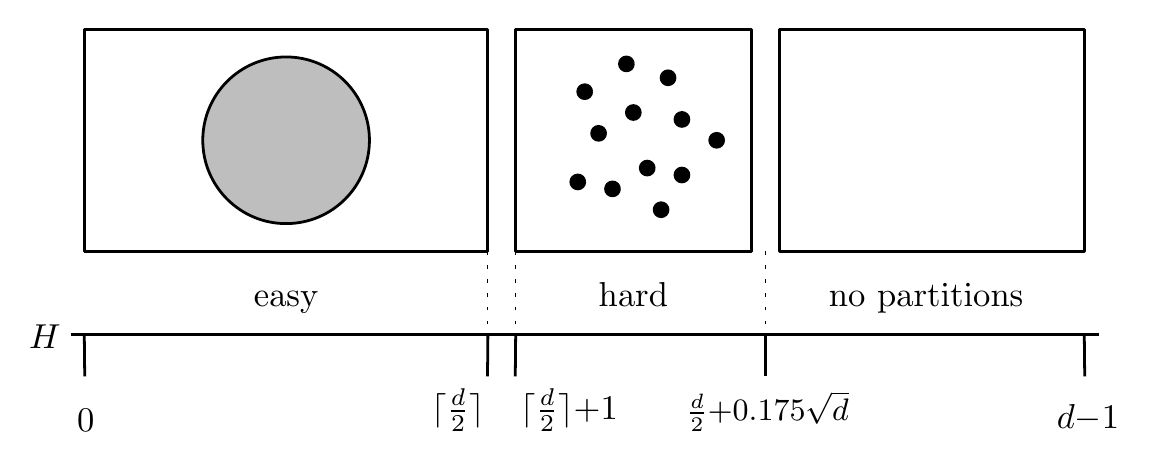}
    \caption{Phase diagram of the solution space for the assortative partition problem with the three different phases.}
    \label{fig:phases}
\end{figure}

\paragraph{Relation between the assortative and disassortative partitions problem, and spin glasses}

Extending the conjecture of \cite{zdeborovaConjectureMaximumCut2010} and for very analogous reasons we argue that on random regular graphs at $m=0$ there is a close relation between the assortative partition problems, the disassortative one, and a spin-glass version thereof. 

All the existence and algorithmic results stated above for the assortative case are hence valid also for the disassortative case with $H_{\rm dis} = d+1 - H_{\rm ass}$ and also for a spin glass with interactions $J_{ij} \in \{ \pm 1\}$ where the local magnetic field on every spin $i$, in its local neighborhood $\partial i = \{ j \in V | (j,i) \in E\} $, is required to be 
$$
     \sum_{j \in \partial i} J_{ij} x_i x_j \ge 2H_{\rm ass} - d \quad \forall i\, . 
$$
In the context of spin glasses these configurations for $H=\lceil d/2 \rceil$ are called the single spin flip stable states. For $H>\lceil d/2 \rceil$ these would correspond to configurations where every spin is stable with a margin/gap of $2H_{\rm ass} - d$. 

\paragraph{Conjectures based on comparisons with existing work}

Comparing our results to past works on (dis)assortative partitions or bisections and spin glasses we pose several questions/conjectures. We kindly ask readers to let us know if/when some of these are established. 

\begin{itemize}   
    \item {\bf Existence of $H$-partitions in general $d$-regular graphs:} Typical random $d$-regular graphs are structurally not so different from general $d$-regular graphs with girth bounded away from below.
    Given this, one may wonder whether our prediction of the existence of $H$-assortative partitions on random $d$-regular graphs is valid also on these general $d$-regular graphs. We conjecture this to be the case. 

   \item {\bf Positive algorithmic results for $H = \lceil d/2 \rceil$:} We give empirical evidence that indeed assortative partitions at $H = \lceil d/2 \rceil$ are found easily and so are disassortative ones. This result has been established at $d=n/2$ in \cite{ferberFriendlyBisectionsRandom2021a}. Future work should aim at a rigorous analysis of the related algorithms (or other ones) establishing that indeed these partitions can be found efficiently on random $d$-regular graphs for all $d$, or even for all general (perhaps large girth) $d$-regular graphs.
   
   \item {\bf Gapped states in spin glasses are computationally hard to find:} Our hardness conjecture has striking implications for the widely studied class of spin-glass models including the famous Sherrington-Kirkpatrick (SK) model \cite{sherrington1975solvable}. Single spin flip stable states, corresponding to $H = \lceil d/2 \rceil$, are algorithmically easy to find and widely studied, see e.g. \cite{brayMetastableStatesInternal1981}. Our results imply that ``gapped'' spin flip stable states, corresponding to $H > \lceil d/2 \rceil$ are computationally hard to find, yet we have not found any related investigation in the existing literature (not even an empirical one). 
   
   While we state most of our results for sparse random regular graphs, it is well known in the spin glass literature that the large $d$ expansion of the sparse case recovers the results on fully connected graphs. Let us thus define a gap $h = (2H - d)/\sqrt{d} = O(1)$ which then gives us for the fully connected limit
   $$
     \frac{1}{\sqrt{n}}\sum_{j \in \partial i} J_{ij} x_i x_j \ge h \quad \forall i\, , 
   $$
   where we recovered the scaling $1/\sqrt{n}$ well known from the SK model. 
   While we stated the equivalence between the assortative partitions and spin glasses with interactions $J_{ij} \in \{\pm 1\}$, it is well known that the distribution of interactions does not really influence macroscopic properties and hence this could imply that even in the canonical SK model states that have a gap bounded away from zero $h>\varepsilon$ for some $\varepsilon>0$ are computationally hard to find, while they exists for $h< 2\times 0.17566=0.35122$. Their existence up to this value of the gap $h$ was predicted in \cite{trevesMetastableStatesAsymmetrically1988a}.
   \item {\bf Omnipresent marginal stability in glasses is a consequence of computational hardness:}  We may speculate that the above computational hardness of finding gapped states in spin glasses is actually the reason behind the omnipresent observation that dynamics in glasses and spin glasses converge to marginally stable states \cite{muller2015marginal,parisi2017marginally}. This rather generic observation could very simply be a consequence of the fact that anything actually stable is computationally hard to access. The arguments about algorithmic hardness in this paper are restricted to random mean-field systems (e.g. sparse random graphs). For instance, if we could plant a gapped partition in a random graph that was easy to find, along the lines of \cite{zdeborova2011quiet}. The finite dimensional cases, where e.g. the corresponding graph is a lattice, may also behave differently. But in the view of the present results, investigating the question of hardness of finding gapped states in general systems including finite dimensional ones should be of interest. 
   \end{itemize}

\section{Assortative partitions for \texorpdfstring{$H> \lceil d/2 \rceil$}{H > ceil(d/2)} are frozen}\label{sec:locked-frozen}

We want to investigate how the solution space of the partition problems on $d$-regular graphs is arranged and consider the assortative partitions as an example.
If we have a valid $H$-assortative partition $P = \langle V_-, V_+ \rangle$ on a given graph, what do we need to do to obtain a new partition $P'$ that is also $H$-assortative, but where at least one node has changed its group compared to $P$?
If with a single flip of a node, we start a cascade that invalidates the assortativity of an extensive fraction of other nodes in the graph, the partitions $P$ and $P'$ are far away from one another in the solution space.

We now show that for some $H$ this is exactly the case and an extensive fraction of the nodes needs to change in order to arrive at the closest solution. The argument is adapted from \cite{zdeborova2011quiet}.
\begin{thm}\label{thm:extensive-fraction}
Let $d \geq 3$ be a fixed integer. Let $G(V,E)$ be any large $d$-regular random graph with $n$ nodes and assume that there exists a partition $P = \langle V_-,V_+\rangle$ for this graph which is an $H$-assortative partition with $H = \lceil d/2 \rceil + k$. 
\begin{itemize}
    \item For $d$ even, set $k\geq 2$.
    \item For $d$ odd, set $k\geq 1$.
\end{itemize}
Then, there exists an $\epsilon > 0$, such that no other solution $P'=\langle V'_-,V'_+\rangle$ exists which is at most $\epsilon n$ node flips from $P$. We call solutions satisfying this property \textit{frozen}. 
\end{thm}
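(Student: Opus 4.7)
The plan is to study the symmetric difference $S = V_+ \triangle V'_+$ of the two partitions and to show that it must be either empty or of size at least $\epsilon n$. The argument has two ingredients: a local assortativity bookkeeping showing that $G[S]$ must have minimum induced degree at least $2H - d \geq 3$, and a standard first-moment estimate on the random $d$-regular graph ensemble showing that no non-empty subgraph of size $\leq \epsilon n$ can have minimum degree $\geq 3$ with high probability.

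First I would carry out the local flip analysis. Pick $v \in S \cap V_+$, so $v$ lies in $V_+$ under $P$ and in $V'_-$ under $P'$. Writing $N_\pm(v) = N(v) \cap V_\pm$ for the neighborhood split under $P$, the neighbors of $v$ in $V'_-$ are exactly $(N_-(v) \setminus S) \cup (N_+(v) \cap S)$. Demanding this set to have size at least $H$, together with $|N_-(v)| \leq d - H$ (from $H$-assortativity of $v$ in $P$), yields
$$|N_+(v) \cap S| - |N_-(v) \cap S| \geq H - |N_-(v)| \geq 2H - d.$$
In particular every $v \in S$ has at least $2H - d$ neighbors inside $S$; the symmetric computation for $v \in S \cap V_-$ gives the same bound. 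The hypothesis on $k$ gives $2H - d = 2k \geq 4$ for even $d$ and $2H - d = 2k + 1 \geq 3$ for odd $d$, so $G[S]$ has minimum degree at least $3$.

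Second I would run the first-moment bound in the configuration model. A set of $s$ vertices whose induced subgraph has minimum degree $\geq 3$ contains at least $\lceil 3s/2 \rceil$ internal edges. Using $\binom{n}{s} \leq (en/s)^s$, $\binom{\binom{s}{2}}{m} \leq (es^2/(2m))^m$, and edge probability $(1+o(1))\, d/n$, the expected number of such subsets is at most
$$(en/s)^s (es/3)^{3s/2} (d/n)^{3s/2}\bigl(1+o(1)\bigr) \;=\; (C_d\, s/n)^{s/2}\bigl(1+o(1)\bigr)$$
with $C_d = e^{5} d^{3}/27$. Choosing any $\epsilon < 1/C_d$, this bound is summable over $4 \leq s \leq \epsilon n$ (while $s \in \{1,2,3\}$ contributes nothing, since no simple graph on at most three vertices has minimum degree $3$), so by Markov's inequality $G$ w.h.p.\ admits no non-empty subgraph of size $\leq \epsilon n$ with minimum degree $\geq 3$. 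Combined with the first step, this forces $S = \emptyset$ (equivalently $P' = P$) or $|S| > \epsilon n$.

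The main obstacle is making the first-moment calculation uniform across the whole range $1 \leq s \leq \epsilon n$: the configuration-model pairing probabilities need a careful accounting when $s$ scales with $n$ so that the asymptotic $d/n$ per-edge probability can be used throughout, and the small-$s$ boundary cases must be dispatched directly. The crucial structural input is the strict inequality $2H - d \geq 3$ guaranteed by the hypothesis on $k$, which ensures the exponent $s/2$ in the first-moment bound is strictly positive and yields a single admissible $\epsilon > 0$; for $H = \lceil d/2 \rceil$ (i.e.\ $k = 0$, or $k = 1$ for $d$ even) the same argument would only deliver minimum degree $\leq 2$, where the first-moment method no longer kills small subgraphs and the freezing phenomenon genuinely breaks down.
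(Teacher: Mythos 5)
Your argument is correct and shares the paper's overall skeleton: you show the flipped set $S$ must be locally dense, and then that random $d$-regular graphs contain no small locally dense sets. The first step matches the paper's almost exactly (the paper phrases it as ``at least $2k$, resp.\ $2k+1$, neighbors of a flipped node must flip,'' which is your bound $|N(v)\cap S|\ge 2H-d$; your derivation, using both the $P$- and $P'$-constraints at $v$, is just written out more explicitly). Where you genuinely diverge is the probabilistic input: the paper cites the edge-expansion theorem for random regular graphs, $h(G,\varepsilon n)\ge d-2-\delta$ w.h.p., and derives a contradiction from $|\partial S|\le (d-2k)|S|$, whereas you prove from scratch, via a first-moment count in the configuration model, that w.h.p.\ no non-empty set of size at most $\epsilon n$ induces minimum degree $\ge 3$. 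The paper's route is shorter by citation and carries the near-optimal expansion constant; yours is self-contained and elementary, produces an explicit admissible $\epsilon$ (of order $1/C_d$), and makes transparent exactly why the threshold is induced minimum degree $3$ --- i.e.\ why $k=1$ with $d$ even escapes the argument, which the paper also remarks on and defers to survey propagation. Both routes cover precisely the same range of $k$.

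The only points you still need to nail down are the ones you flag yourself: uniformity of the per-edge probability in the configuration model when the number of conditioned edges is $\Theta(n)$ (an upper bound of the form $c\,d/n$ valid for all $m\le \tfrac{3}{2}\epsilon n$ suffices and only worsens $C_d$ by a constant), and the transfer from the configuration model to the uniform simple $d$-regular graph, which follows since the probability of simplicity is bounded away from zero for fixed $d$. Neither is a gap in the idea, just standard bookkeeping; with those spelled out the proof is complete.
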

To prove this fact, we can use that random regular graphs are expanders with high probability.
Let the \textit{edge expansion} or \textit{isoperimetric number } of a graph $G(V,E)$ be
$$
h(G,c) = \min_{S \subset V, |S| \leq c} \frac{|\partial S|}{|S|}\, ,
$$
where we define the edge boundary $\partial S = \{ (i,j) \in E | i \in S \land j \in \bar{S} \}$ and $\bar{S} = V \setminus S$.
\begin{thm}[\cite{hooryExpanderGraphsTheir2006}]\label{thm:expander}
Let $d\geq 3$ be a fixed integer. Then for every $\delta > 0$ there exists $\varepsilon > 0$ such that for almost every $d$-regular graph $G(V,E)$ with $n$ nodes
$$
h(G,\varepsilon n) \geq d-2-\delta\, .
$$
\end{thm}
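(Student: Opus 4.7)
The plan is to prove the edge-expansion bound by a first-moment computation in the configuration (pairing) model for random $d$-regular multigraphs, and then to transfer the resulting a.a.s.\ statement to the uniform distribution on simple $d$-regular graphs. The configuration model generates a random $d$-regular multigraph by uniformly pairing the $dn$ labelled half-edges (with $d$ attached to each vertex). A classical result of Bollob\'as asserts that the probability this pairing yields a simple graph is bounded below by a positive constant $c_d > 0$ depending only on $d$, so any property holding with probability $1 - o(1)$ in the configuration model also holds asymptotically almost surely under the uniform distribution on simple $d$-regular graphs on $n$ vertices. This reduction is what makes the small-set expansion bound combinatorially tractable.

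Fix $\delta > 0$, set $c := d - 2 - \delta$, and let $X_s$ denote the number of subsets $S \subset V$ with $|S| = s$ and $|\partial S| < c s$. Using the handshake identity $d|S| = 2 e(S) + |\partial S|$, the bad event is equivalent to $e(S) > (1 + \delta/2) s$, i.e.\ the subgraph induced on $S$ is abnormally dense. By Markov's inequality it suffices to show $\sum_{s=1}^{\varepsilon n} \mathbb{E}[X_s] \to 0$; by vertex-transitivity, $\mathbb{E}[X_s] = \binom{n}{s} P_s$, where $P_s$ is the probability that a fixed $s$-subset has more than $(1 + \delta/2)s$ internal edges. A direct count in the configuration model yields, for each $m$, an explicit expression for $P(e(S) = m)$ in terms of binomial coefficients and double factorials (choose which $2m$ of the $ds$ half-edges of $S$ pair internally, pair them, then consistently pair the remaining half-edges of $S$ with half-edges of $\bar S$ and the rest of $\bar S$ among themselves), and summing over $m > (1 + \delta/2)s$ gives a tractable upper bound on $P_s$.

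Third, I would apply Stirling's formula to $\binom{n}{s} P_s$ in the regime $s = \alpha n$ and write the result as $\exp(n\, \Phi_{d,\delta}(\alpha) + o(n))$ for an explicit entropy-like function $\Phi_{d,\delta}$. The underlying intuition is that a random $d$-regular graph is locally treelike, so a typical induced subgraph on $s$ vertices has $s - 1 + O(1)$ internal edges (a tree plus a small number of cycles), giving edge-boundary approximately $(d-2)s$; requiring the stronger inequality $e(S) \ge (1+\delta/2)s$ forces a linear number of excess cycles inside $S$, a large-deviation event whose negative contribution to $\Phi_{d,\delta}$ beats the $s \log(n/s)$ contribution of $\log \binom{n}{s}$ once $\alpha$ is small enough. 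The target is to show $\Phi_{d,\delta}(\alpha) < -\eta$ for some $\eta > 0$ uniformly on $0 < \alpha \le \varepsilon$, with $\varepsilon = \varepsilon(d, \delta)$ chosen sufficiently small. Summing $\mathbb{E}[X_s]$ over $s$ then gives $o(1)$, Markov yields the theorem a.a.s.\ in the configuration model, and the reduction above transfers it to the uniform model on simple graphs.

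The main obstacle is the uniform control of $\Phi_{d,\delta}$ across the entire range $1/n \le \alpha \le \varepsilon$. At very small $s$, the dominant entropy term $\log \binom{n}{s} \sim s \log(n/s)$ is delicate to cancel, and it is precisely the hypothesis $d \ge 3$ that saves the argument: at $d = 2$ the critical ratio $d - 2$ degenerates to zero and no linear expansion is possible. Near the upper endpoint $s \approx \varepsilon n$, $\Phi_{d,\delta}$ attains its largest value and the calibration of $\varepsilon$ in terms of $\delta$ and $d$ is the most delicate step; identifying an explicit admissible $\varepsilon$ is routine but requires carefully unfolding the Stirling expansions and optimizing over the internal-edge count $m$.
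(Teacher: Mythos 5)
The paper does not prove this statement: it is imported verbatim from the cited survey of Hoory, Linial and Wigderson, where it is a standard fact about random regular graphs. Your proposal — a first-moment union bound over small vertex sets in the configuration model, using the identity $d|S| = 2e(S) + |\partial S|$ to recast weak edge expansion as an excess of internal edges, followed by the transfer to uniform simple $d$-regular graphs via the bounded-below probability of simplicity — is essentially the argument given in that reference, and your outline (including the observation that the binding regime is the calibration of $\varepsilon$ near $s \approx \varepsilon n$, where $\log(n/s)$ is only a constant) is sound.
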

\begin{proof}[Proof of Theorem~\ref{thm:extensive-fraction}.] 
We argue by contradiction and assume that there is indeed a  $H$-assortative partition $P'$ which is close to the given partition $P$.
Given some $\delta > 0$ we choose $\varepsilon > 0$ such that Theorem~\ref{thm:expander} holds.
We then assume that the set of flipped nodes between $P$ and $P'$ is $S = \{ j \in V | (j\in V_+ \land j\in V'_-) \vee (j \in V_- \land j \in V'_+)\}$ with a size of $|S| = \tilde{n}  \leq \varepsilon n$.
For a specific $H$ a subset of  the direct neighbors of $i \in S$ must also be contained in $S$.
In particular, for $k \geq 0$ and $i \in V_+ \cap S$,  at least $2k$ neighbors from $\partial i \cap V_+$ need to change to $V_-'$ together with $i$ for even $d$ and likewise for $V_-$.
This implies the existence of edges within $S$.
Since $G$ is $d$-regular, we can bound the number of edges that emanate from $S$ to $\bar{S}$ as $|\partial S| \leq (d-2k)\tilde{n}$ and therefore $h(G,\tilde{n}) \leq d-2k$.
By Theorem~\ref{thm:expander} we get that $d-2k \geq h(G,\tilde{n}) \geq d-2-\delta$.
Since we can select $\delta$ as small as we want, this inequality is only valid when $k < 2$.
For larger $k$, it does not hold and therefore our assumption, that the solutions $P$ and $P'$ are $\tilde{n}\leq \varepsilon n$ close is wrong which implies that Theorem~\ref{thm:extensive-fraction} holds.
We can repeat the same arguments for odd $d$ where we need to flip at least $2k+1$ neighbors of a node $i \in S$ and arrive at a contradiction for $k \geq 1$.
\end{proof}

However, our proof of Theorem~\ref{thm:extensive-fraction} gives only a lower bound on the distance between solutions. 
It could very well be necessary to change even more nodes than implied there to arrive at the closest other solution.
This is in particular important for $k=1$ and even degree $d$ which are not included in the Theorem, since the flip of a node implies that least two of its neighbors flip as well.
If these flipped nodes are arranged in a cycle, one would have a solution that is at $O(\log(n))$ distance to the given $P$ and therefore not extensive.
But flipping only these loops might not be sufficient typically. 
This is because flipping two of $i$'s neighbors is also only a lower bound, and it might be necessary to change $3$ or more, which would propagate the changes further on the graph and eventually change an extensive fraction nonetheless. 
In order to resolve how many nodes typically need to be flipped in random regular graphs, in particular for $k=1$, we develop the survey propagation in Section~\ref{sec:survey-popagation-theory} that tracks such required changes.
Using survey propagation, we find that actually the problem is frozen, i.e. valid configurations are at extensive distance from each other, for all $H > \lceil d / 2 \rceil$.
An analogous set of arguments can be applied to the disassortative partitions, where the freezing occurs at $H \leq \lceil d/2 \rceil$.

Theorem \ref{thm:extensive-fraction} implies that under the same assumptions the problem has the overlap gap property which is a sufficient condition for hardness for a number of algorithm classes, including local algorithms, AMP, low-degree polynomials and Boolean circuits \cite{gamarnik2021overlap}. 
What remains to be shown is that the partitions satisfying the assumptions of Theorem \ref{thm:extensive-fraction} exist in random $d$-regular graphs. This is addressed in the next section. 

\section{Replica Symmetric solution via Belief Propagation}\label{sec:rs}

\subsection{Probabilistic setting of the problem} \label{sec:rs-1}
\paragraph{Probability distribution}
Any partition $\langle V_-,V_+ \rangle$ on the graph can be encoded into a vector $\mathbf{x} = (x_1,x_2,...,x_n) \in \{\pm 1\}^n$.
Then, we can define a probability distribution on the space of possible valid partitions as follows
\begin{align}
\text{Prob}\Big(\mathbf{x}\Big) 
&= \frac{1}{Z} \exp\left(-\mu \sum_{i\in V}x_i \right) \prod_{i \in V}  \mathcal{C}_H\left(x_i\sum_{j \in \partial i}J_{ij} x_j\right)\, , \label{eq:prob-dist}
\end{align}
where $Z$ is the normalizing constant and $\mu \in \mathbb{R}$ is the chemical potential used to adjust the magnetization $m$ to a desired value. 
The $\mathcal{C}_H$ expresses a constraint on the exclusive neighborhood $\partial i$ of each node $i$
\begin{align}
\mathcal{C}_H\left(z\right) &= \ind\left(H \leq \frac{d+z}{2}\right)\, .\label{eq:constraint}
\end{align}
Here, $\ind(B)$ is the indicator function of a boolean statement $B$; if $B$ is true it takes value $1$ and $0$ otherwise.
If we set $J_{ij} = +1$ for all $(ij) \in E$ we obtain the assortative partition problem. Indeed, the sum $x_i\sum_{j \in \partial i} x_j= \sum_{j \in \partial i}\ind(x_i = x_j) - \sum_{j \in \partial i} \ind(x_i \neq x_j)$ counts how many more neighbors of $i$ are in the same group as the central node $i$ over those that are in the other group.
In the constraint $C_H$ this is converted to exactly the number of nodes in the neighborhood that are in the same group as $i$.
Thus, the constraint is fulfilled so long as $i$ has at least $H$ neighbors in its own group.
This means that we have an $H$-assortative partition if and only if all constraints $C_H$ are satisfied.
Consequently,  for $J_{ij}=+1$ the probability distribution \eqref{eq:prob-dist} only has non-zero probability on $H$-assortative partitions.
Additionally, the factor $\exp\left(-\mu \sum_{i\in V}x_i \right)$ weights partitions according to the size of the two groups, using the parameter $\mu$ that is referred to as the chemical potential in physics. The chemical potential plays the role of a Lagrange parameter, and adjusting its values suitably allows to sample only solutions of a given magnetization $m$.

We can proceed similarly to describe the disassortative partitions by setting $J_{ij} = -1$ for all $(ij)\in E$. The sum $-x_i\sum_{j \in \partial i} x_j$ then counts how many neighbors of $i$ are in the opposite group and the probability distribution (\ref{eq:prob-dist}) is therefore non-zero only on $(d-H+1)$-disassortative partitions.

\paragraph{Obtaining the number of valid partitions}
Given this definition of the probability distribution, we are able to compute the number of partitions that have non-zero probability for a given magnetization $m$.
For given values of $J_{ij}$, $H$ and $m$ we define this number to be $\mathcal{N}(m)$ and its entropy $s(m)$ (here we do not make dependence on $J_{ij}$ and $H$ explicit)
\begin{align}
    \mathcal{N}(m) = e^{ns(m)}\, ,
\end{align}
where $s(m)$ is the entropy.
Defining the free entropy density to be $\Phi(\mu) = \log(Z)/n$, we can split the partition function $Z$ according to the magnetization $m$ as follows
\begin{align}
    e^{n\Phi(\mu)} = Z = \int_{\{\mathbf{x} | \forall i \in V: \mathcal{C}_H(x_i \sum_{j \in \partial i} J_{ij}x_j) = 1 \}} e^{-\mu \sum_{i \in V} x_i} = \int_{m\in [-1,1]} {\rm d}m \hspace{0.5em}e^{ns(m)} e^{-nm\mu}\, .
\end{align}
For large graphs where the number of nodes goes to infinity we can apply the saddle point method to see that 
\begin{align}
    \Phi(\mu) &= s(\hat{m}) - \mu \hat{m} \, ,\\
    \frac{\partial s(m)}{\partial m}\Big|_{m=\hat{m}} &= \mu\, .
\end{align}
In order to obtain the function $s(m)$, we need to compute the magnetization $m(\mu)$ and the free entropy density $\Phi(\mu)$ for different values of $\mu$.
Both of them can be computed from the partition function $Z$, as the free entropy density is simply its logarithm and we can recover the magnetization as 
\begin{align}
    \frac{\partial \Phi(\mu)}{\partial \mu} =  \mathbb{E}_{\mathbf{x}}[m]  \, .
\end{align}
Generally, the high-dimensional sum $Z$ is intractable, but on tree graphical models the established tool of belief propagation (BP) \cite{yedidia2003understanding} allows to compute $Z$ exactly.
On large locally tree-like graphical models, analysis based on BP may be asymptotically exact under conditions discussed subsequently.

\begin{figure}
    \centering
    \includegraphics[width=0.3\textwidth,align=c]{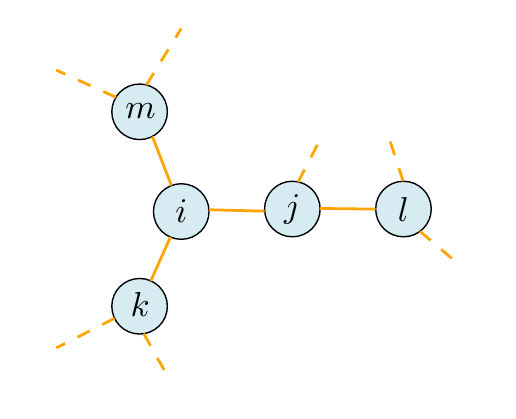}
    \includegraphics[width=0.65\textwidth,align=c]{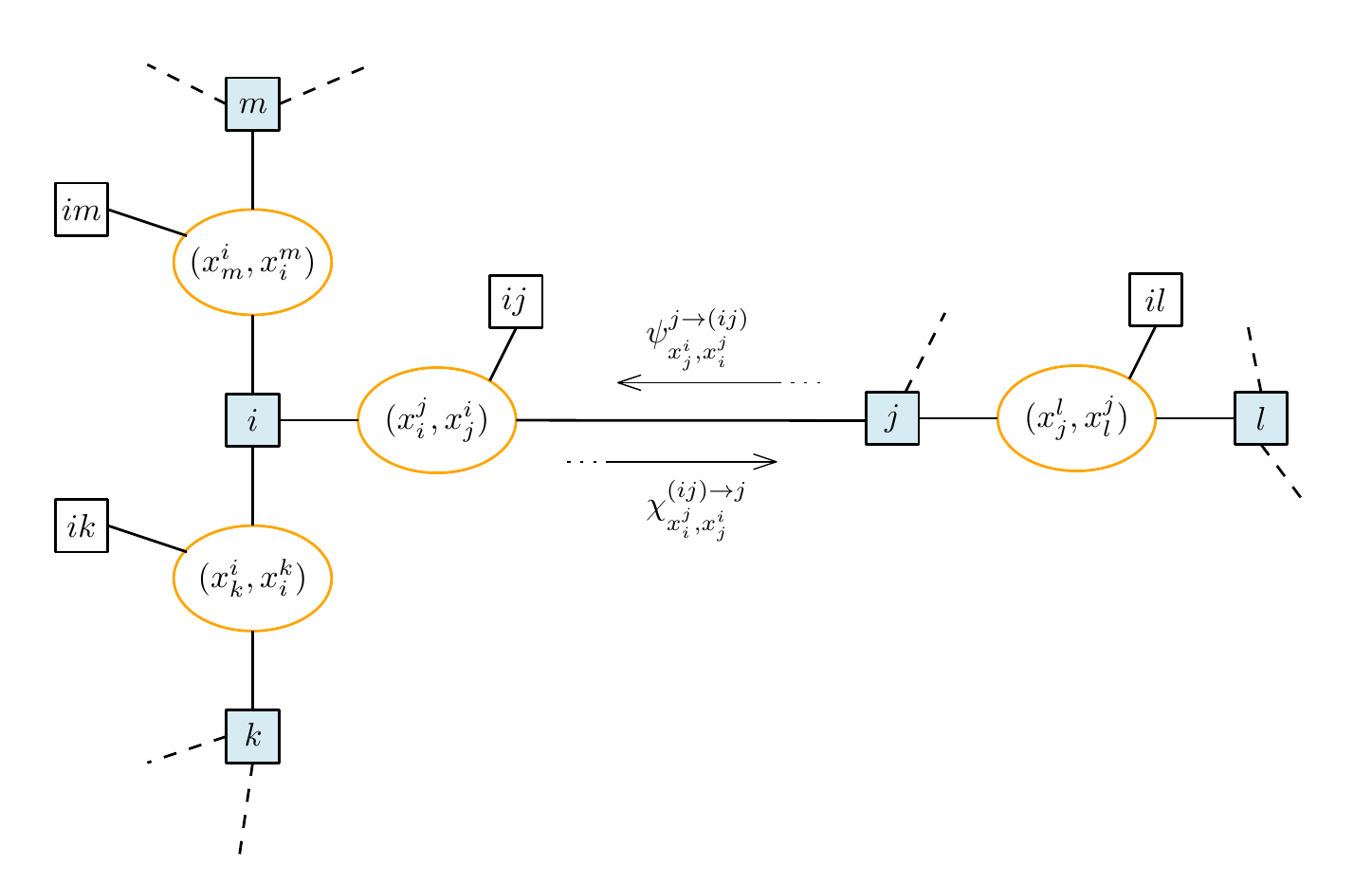}
    
    \caption{A subgraph of the factor graph for the assortative partition problem on a $3$-regular graph. \textit{(Left)} Original graph. \textit{(Right)} Factor graph. 
    The tuples in the variable nodes \vcenteredinclude{\includegraphics[width=0.03\textwidth]{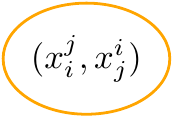}} can take on values in the possible groups in the partition. 
    The factor~nodes~\vcenteredinclude{\includegraphics[width=0.02\textwidth]{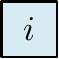}} enforce the constraint on assortativeness via $f_i$.
    The factor~nodes~\vcenteredinclude{\includegraphics[width=0.02\textwidth]{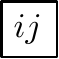}} correspond to the magnetization constraint.
    Messages $\psi$ and $\chi$ are sent along the edges.}
    \label{fig:factor-graph}
\end{figure}

\subsection{Belief Propagation on a tree-like factor graph}

\paragraph{Tree-like factor graph}
To use the Belief Propagation and compute the corresponding entropy, we need to find the factor graph associated to the probability measure \eqref{eq:prob-dist}.
In the most straightforward choice, every node in the original graph becomes a variable and each neighborhood of $i$ becomes a factor to ensure the constraint~$\mathcal{C}_H$.
Priors on the variables account for the chemical potential $\mu$. Unfortunately, this factor graph of \eqref{eq:prob-dist} is not tree-like by default, since the constraints on the assortativeness of node $i$ depend on all its neighbors $\partial i$.
Since their own assortativeness reciprocally also depends on $i$, this relation is responsible for creating small loops in the factor graph.
This poses a problem for applying belief propagation which is exact on trees and under specific conditions can be asymptotically exact on  tree-like graphs.

To eliminate the problematic small loops, we construct a different factor-graph by introducing auxiliary variables.
For each node membership $x_i$ we create exactly $|\partial i|$ copies (replicas) of it.
We call each replica $x_i^j$, where $j \in \partial i$ denotes the neighbor of $i$ to which this replica is associated.
We then obtain the desired tree-like factor graph on tuple variables, as depicted in Figure~\ref{fig:factor-graph}.
The prior contributions enforcing the chemical potential can then be expressed as
\begin{align}
              g_{ij}(x_i^j, x_j^i) &= \exp\left\{-\mu\left( \frac{x_i^j}{|\partial i|} + \frac{x_j^i}{|\partial j|} \right)\right\}\end{align}
and the factor nodes representing the neighborhood of node $i$ are
\begin{align} f_{i}(\{x_i^k\}_{k \in \partial i \setminus j},\{x_k^i\}_{k \in \partial i \setminus j}) &= \mathcal{C}_H\left( \sum_{j \in \partial i} J_{ij} x_i^j x_j^i \right)\ind\left(\left|\sum_{j\in \partial i} x_i^j\right|=|\partial i|\right)\, . 
        \end{align}
The second factor in $f$ is added to enforce the replicas of one node to all be the same so that the resulting probability distribution of the tree-like factor graph exactly corresponds to the probability distribution in \eqref{eq:prob-dist}.
In contrast to the loopy factor graph directly obtained from \eqref{eq:prob-dist} the variables on the new factor graph are tuples of node memberships $(x_i^j,x_j^i)$ instead of single memberships.

\paragraph{Belief propagation}
We then readily apply the Belief Propagation equations \cite{yedidia2003understanding} that will provide an asymptotically exact solution under the assumption that the incoming messages are independent. We will discuss how to self-consistently check this assumption in the next section. In statistical physics the analysis based on Belief Propagation is called the replica symmetric analysis, as the assumptions under which Belief Propagation gives the asymptotically exact result are equivalent to those of replica symmetry \cite{mezard1987spin,mezard2009information}.

We call $\psi_{x_i^j,x_j^i}^{i \to (ij)}$ the message sent by the factor node $(i)$ to the variable node $(ij)$, and $\chi_{x_i^j,x_j^i}^{(ij)\to j}$ the message sent by the variable node $(ij)$ to the factor $(j)$.
The oriented tuples $(x_i^j,x_j^i)$ contain the replicas of the groups $i$ and $j$ are assigned to.
According to belief propagation, they can be computed as
\begin{align}
     \psi_{x_i^j,x_j^i}^{i \rightarrow (ij)} &= \frac{1}{Z^{i \rightarrow (ij)}} \sum_{\{x_i^k,x_k^i\}_{k\in\partial i \setminus {j}}} \mathcal{C}_H\left( \sum_{j \in \partial i} J_{ij} x_i^j x_j^i \right)\ind\left(\left|\sum_{j\in \partial i} x_i^j\right|=|\partial i|\right) \prod_{k\in \partial i \setminus j} \chi _{x_k^i,x_i^k}^{(ki)\rightarrow i}\, ,\label{eq:hugefixpoint} \\
    \chi_{x_i^j,x_j^i}^{(ij)\rightarrow j} &= \frac{1}{Z^{(ij)\rightarrow j}} \exp\left\{-\mu\left( \frac{x_i^j}{|\partial i|} + \frac{x_j^i}{|\partial j|} \right)\right\}\psi_{x_i^j,x_j^i}^{i\rightarrow(ij)}\, ,
\end{align}
where the $Z^{i \rightarrow (ij)} , Z^{(ij)\rightarrow j}$ are normalisation constants to ensure that $\psi^{i \rightarrow (ij)}, \chi^{(ij)\rightarrow j}$ are probability distributions.

The indicator function in the factor to variable messages $\psi_{x_i^j,x_j^i}^{i \rightarrow (ij)}$ is zero when the replicas of $i$ are not all identical.
From this, we can safely assume that at the fixed point for all $i$ it holds that $x_i = x_i^j$ for $j \in \partial i$.
This simplifies the messages greatly and we obtain
\begin{align}
     \psi_{x_i,x_j}^{i \rightarrow (ij)} &= \frac{1}{Z^{i \rightarrow (ij)}} \sum_{\{x_k\}_{k\in\partial i \setminus {j}}} \mathcal{C}_H\left( \sum_{j \in \partial i} J_{ij} x_i x_j \right) \prod_{k\in \partial i \setminus j} \chi _{x_k,x_i}^{(ki)\rightarrow i}\, ,\label{eq:333} \\
    \chi_{x_i,x_j}^{(ij)\rightarrow j} &= \frac{1}{Z^{(ij)\rightarrow j}} \exp\left\{-\mu\left( \frac{x_i}{|\partial i|} + \frac{x_j}{|\partial j|} \right)\right\}\psi_{x_i,x_j}^{i\rightarrow(ij)}\, .
\end{align}
For a fixed point of these messages, the corresponding replica symmetric free  entropy $\Phi_{\text{RS}}$ can be formulated as
\begin{align}
    n\Phi_{\text{RS}} &= 
    \sum_{(ij)\in E}\log \left(Z^{(ij)}  \right) 
    + \sum_{i \in V}\log \left(Z^{i} \right) 
    -  2\sum_{(ij)\in E}\log \left(Z^{(ij),i} \right)
    \, ,\\
    Z^{(ij)} &= \sum_{\{x_i,x_j\}} \exp\left\{-\mu\left( \frac{x_i}{|\partial i|} + \frac{x_j}{|\partial j|} \right)\right\}\psi_{x_i,x_j}^{i\rightarrow (ij)} \psi_{x_j,x_i}^{j\rightarrow (ji)}\, ,\\
     Z^{i} &= \sum_{x_i,\{x_{k}\}_{k\in\partial i}}\mathcal{C}_H\left( \sum_{j \in \partial i} J_{ij} x_i x_j \right)\prod_{j \in \partial i} \chi_{x_j,x_i}^{(ji) \rightarrow i}\, ,\\
    Z^{(ij), i} &= \sum_{\{x_i,x_j\}} \chi_{x_j,x_i}^{(ji)\rightarrow i}\psi_{x_i,x_j}^{i \rightarrow (ij)}\, .
\end{align}
Substituting $\psi_{x_i,x_j}$ in this expression by rearranging \eqref{eq:333}, some parts simplify and we get
\begin{align}\label{eq:entropyyy}
     n\Phi_{\text{RS}} &=
    \sum_{i \in V}\log \left(Z^{i} \right) - \sum_{(ij)\in E}\log \left(Z^{ij} \right)\, ,\\
     Z^{i} &= \sum_{x_i,\{x_{k}\}_{k\in\partial i}}\mathcal{C}_H\left( \sum_{j \in \partial i} J_{ij} x_i x_j \right)\prod_{j \in \partial i} \chi_{x_j,x_i}^{(ji) \rightarrow i}\, ,\label{eq:partition-sum}\\
     Z^{ij} &= \sum_{\{x_i,x_j\}}
     \exp\left\{\mu\left( \frac{x_i}{|\partial i|} + \frac{x_j}{|\partial j|} \right)\right\}
    \chi_{x_i,x_j}^{(ij)\rightarrow j}
    \chi_{x_j,x_i}^{(ji)\rightarrow i}\, . \label{eq:entropyyyend}
\end{align}
Likewise we can eliminate $\psi_{x_i,x_j}^{i \rightarrow (ij)}$ in the update rule and we combine the computation of two message types into only a single message 
\begin{align}
       \chi_{x_i,x_j}^{i \to j} &= \chi_{x_i,x_j}^{(ij)\rightarrow j}= \frac{1}{Z^{i \to j}} \exp\left\{-\mu\left( \frac{x_i^j}{|\partial i|} + \frac{x_j^i}{|\partial j|} \right)\right\} \sum_{\{x_k\}_{k\in\partial i \setminus {j}}} \mathcal{C}_H\left( \sum_{j \in \partial i} J_{ij} x_i x_j \right) \prod_{k\in \partial i \setminus j} \chi_{x_k, x_i}^{k \to i}\, ,\label{eq:final-fixed-point}
\end{align}
where we redefine $Z^{i\to j} = Z^{i \rightarrow (ij)}Z^{(ij) \rightarrow j}$.
From this we can compute the magnetization $m$ as a function of $\mu$ by 
\begin{align}
     m &= \frac{\partial \Phi_{RS}(\mu)}{\partial \mu} 
     = -\frac{d}{2}\sum_{(ij) \in E} \frac{ \sum_{(x_i,x_j) \in \{\pm 1\}^2} \left( \frac{x_i}{|\partial i|} + \frac{x_j}{|\partial j|} \right)  e^{\mu \left( \frac{x_i}{|\partial i|} + \frac{x_j}{|\partial j|} \right)}  \chi_{x_i,x_j} \chi_{x_j,x_i}}{\sum_{(x_i,x_j) \in \{\pm 1\}^2}   e^{\mu \left( \frac{x_i}{|\partial i|} + \frac{x_j}{|\partial j|} \right)}  \chi_{x_i,x_j} \chi_{x_j,x_i}}\, .
\end{align}
Now we have all the ingredients ready to compute the entropy $s(m)$ on a full graph as
\begin{align}\label{eq:entropy-computation}
    s(m) = \Phi_{\rm RS}(\mu) + \mu m.
\end{align}
We only need to find the fixed point $\chi$ for some given value of $m$ and calculate $\Phi_{\rm RS}$ at this fixed point from \eqref{eq:entropyyy}--\eqref{eq:entropyyyend}.
Inserting this into \eqref{eq:entropy-computation}  enables us to find the entropy at the corresponding $\mu$.

Algorithm \ref{alg:full-BP} shows the procedure employed to find a fixed point of the BP.
Note that an update on $\mu$ needs to be included to enforce the fixed point to have magnetization $m=m_{\rm target}$ rather than $m = \pm 1$.
At convergence, $\Phi_{\rm RS}$ can be computed.

\begin{algorithm}
\caption{Belief Propagation on a full graph $G(V,E)$}\label{alg:full-BP}
\textbf{In:} 
$G(V,E); \chi = \{\chi^{i \to j}\}_{(ij) \in E}$; $m_{\rm target} \in [-1,1]$\;
\While{$\chi$ not converged}
{
$\mu \gets - \argmin_{\mu \in \mathbb{R}}| m(\mu, \chi) - m_{\rm target}|$\;
select uniformly random $i \in V$\;
select uniformly random $j \in \partial i$\;
\For{$(x_i,x_j) \in \{\pm 1\}^2$}{

$\chi^{i \to j}_{x_i,x_j} \gets \exp\left\{-\mu\left( \frac{x_i^j}{|\partial i|} + \frac{x_j^i}{|\partial j|} \right)\right\} \sum_{\{x_k\}_{k\in\partial i \setminus {j}}} \mathcal{C}_H\left( \sum_{j \in \partial i} J_{ij} x_i x_j \right) \prod_{k\in \partial i \setminus j} \chi_{x_k, x_i}^{k \to i}$\;
}
$\chi^{i \to j} \gets \frac{\chi^{i \to j}}{|\chi^{i \to j}|}$\;
}
return $\chi$, $\mu$\;
\end{algorithm}
\newpage

\subsection{BP on \texorpdfstring{$d$}{d}-regular graphs}\label{sec:BP-dreg}
We now consider the problem only on $d$-regular random graphs and $J = J_{ij}$ for all $i,j$.
Thus, also $|\partial i| = |\partial j| = d$ for all $i,j$.
Under the same assumptions under which the Belief Propagation is asymptotically exact on tree-like graphs, this implies, that at a fixed point of the belief propagation, all messages have to be equal locally and we set them to be
\begin{align}
    \chi_{x,y} = \chi_{x,y}^{i\rightarrow j}, \quad \forall i, j \in V\, .
\end{align}
Note, that the tuples of membership assignments $(x,y)$ are still oriented, which needs to be taken into account when formulating the update rules.
We can use this to simplify the BP update rule considerably by applying this assumption to \eqref{eq:final-fixed-point} to get
\begin{align}
    \chi_{x,y} &= \frac{1}{Z} e^{-\frac{\mu}{d}\left( x + y \right)}\sum_{r=0}^{d-1} \binom{d-1}{r} \mathcal{C}_H\left(J \,  \frac{2r}{d}\right)(\chi_{x,x})^r (\chi_{-x,x})^{(d-1-r)}\, .
\end{align}
By introducing the sum formulation we only count the assignments on the remaining $d-1$ edges that fulfill the constraint $\mathcal{C}_H$. 
The binomial coefficient is present to account for the number of combinations in which this can be achieved. 
The replica symmetric free entropy on $d$-regular graphs reduces to
\begin{align}\label{eq:Entropy-RS}
    \Phi_{\text{RS}} &=
    \log \left(Z_{\rm node} \right) - \frac{d}{2} \cdot \log \left(Z_{\rm edge} \right)\, ,\\
    Z_{\rm node} &= \sum_{x \in \{\pm 1\}} \sum_{r=0}^d \binom{d}{ r} \mathcal{C}_H\left(J \,  \frac{2r}{d}\right)(\chi_{x,x})^r (\chi_{-x,x})^{(d-r)}\, ,\\
    Z_{\rm edge} &= \sum_{(x,y) \in \{\pm 1\}^2} e^{\frac{\mu}{d}\left( x + y \right)} \chi_{x,y} \chi_{y,x}\, .
\end{align}
The magnetization simplifies to 
\begin{align}
     m &= -\frac{1}{2} \frac{ \sum_{(x,y) \in \{\pm 1\}^2}  (x+y)  e^{\frac{\mu}{d}\left( x + y \right)}  \chi_{x,y} \chi_{y,x}}{\sum_{(x,y) \in \{\pm 1\}^2}  e^{\frac{\mu}{d}\left( x + y \right)}  \chi_{x,y} \chi_{y,x}}\, .
\end{align}
For the case of $H_{\rm ass}$-assortative partitions with $J=+1, H= H_{\rm ass}$ the BP update and computation of $Z_{\rm node}$ simplify into
\begin{align}
    \chi_{x,y}^{\rm ass} &= \frac{1}{Z}  e^{-\frac{\mu}{d}\left( x + y \right)} \sum_{r = \minH_{\rm ass} - \ind(x = y)}^{d-1} \binom{d-1}{ r} (\chi_{x,x})^r (\chi_{-x,x})^{(d-1-r)}\, , \label{eq:dreg-fixed-point}\\
    Z_{\rm node}^{\rm ass} &= \sum_{x \in \{\pm 1\}} \sum_{r=\minH_{\rm ass}}^d \binom{d}{r} (\chi_{x,x})^r (\chi_{-x,x})^{(d-r)}\, .
\end{align}
The last line stems from the fact that the factor for a specific node in partition $x$ with a given neighbor in partition $y$ and a remaining neighborhood of size $d-1$ is only non-zero if it has at least $\minH - \ind(x = y)$ neighbors that are in the same partition as $x$.
Only if this is fulfilled, the central node is assortative.

Applying the same for $H_{\rm dis}$-disassortative partitions with $J=-1, H= d-H_{\rm dis}+1$ we have
\begin{align}
    \chi_{x,y}^{\rm dis} &= \frac{1}{Z}  e^{-\frac{\mu}{d}\left( x + y \right)} \sum_{r = 0}^{\minH_{\rm dis}  - \ind(x = y) - 1} \binom{d-1}{ r} (\chi_{x,x})^r (\chi_{-x,x})^{(d-1-r)}\, ,\\
    Z_{\rm node}^{\rm dis} &= \sum_{x \in \{\pm 1\}} \sum_{r=0}^{\minH_{\rm dis}-1} \binom{d}{ r} (\chi_{x,x})^r (\chi_{-x,x})^{(d-r)}\, .
\end{align}

For these formulas, we can use the same procedure as before for the BP on the full graph to find the entropy at a specific magnetization for $d$-regular graphs.
The only difference will be, that we iterate on exactly one message instead of on two per every edge and instead of updating every edge randomly we introduce dampening.

\paragraph{Equality between assortative and disassortative bipartitions}
From the fixed point equations of the (dis)assortative partitions, we can immediately see that for $\mu=0$ there is a strong formal resemblance between the two problems.
This is the case when there is no external fields and the fixed points converge to a solution for $m=0$, a bipartition.
For assortative bipartitions with a fixed point $\hat{\chi}^{\rm ass}$ and $H_{\rm ass}$, it follows that $\hat{\chi}^{\rm dis}$ is a fixed point for the disassortative case with $H_{\rm dis} = d - H_{\rm ass} + 1$:
\begin{align}
    \hat{\chi}^{\rm dis}_{x,x} &= \hat{\chi}^{\rm ass}_{-x,x} \, ,\label{eq:equality1}\\
    \hat{\chi}^{\rm dis}_{-x,x} &= \hat{\chi}^{\rm ass}_{x,x}\, , \forall x \in \{\pm 1\}\, . \label{eq:equality2}
\end{align}
Moreover, under these conditions $\Phi_{\rm RS}^{\rm ass} = \Phi_{\rm RS}^{\rm dis} $ is equal, as $Z_{\rm node}^{\rm dis} = Z_{\rm node}^{\rm ass}$ and $Z_{\rm edge}$ does not change under \eqref{eq:equality1}--\eqref{eq:equality2}.
Note, that this equality does not hold for $\mu \neq 0 $, since for imbalanced partitions the elements of $\chi$ are weighted differently and we have different fixed point equations.

In sight of this equality result, we will analyze the BP-based results for both the assortative and disassortative partition problem interchangeably in the case of zero magnetization. We also note that for  regular random graphs with $m=$ the replica symmetric entropy is generically equal to the annealed entropy \cite{mora2007geometry} and is thus an upper bound on the true entropy.

\subsection{Analysis of BP results for \texorpdfstring{$d$}{d}-regular random graphs}\label{sec:bp-results}
\begin{figure}[b]
     \centering

     \begin{subfigure}[b]{0.49\textwidth}
         \centering
         \includegraphics[width=\textwidth]{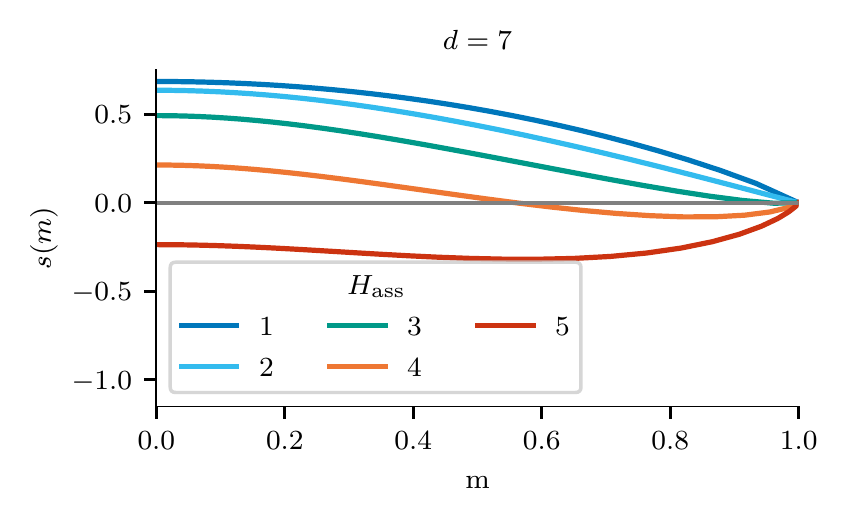}
          \caption{assortative}
        \label{fig:vary-m-ass}
     \end{subfigure}
     \centering
     \begin{subfigure}[b]{0.49\textwidth}
         \centering
         \includegraphics[width=\textwidth]{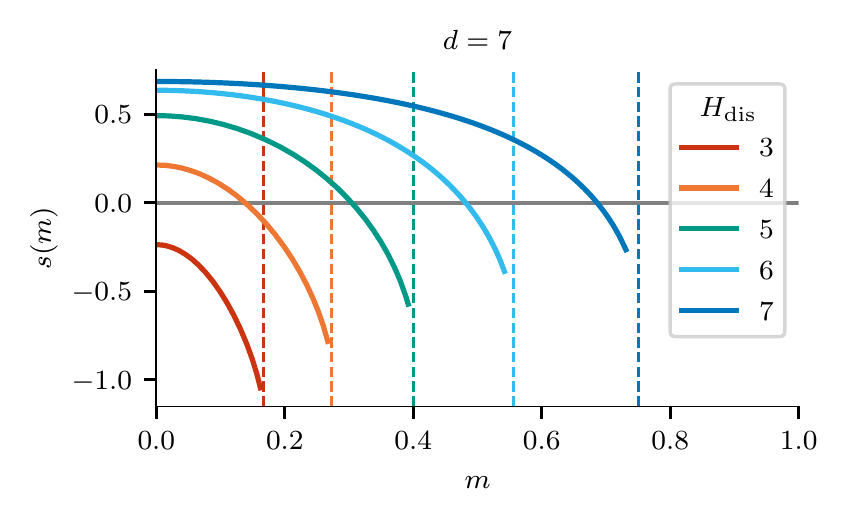}
           \caption{disassortative}
        \label{fig:vary-m-dis}
     \end{subfigure}
     \caption{\textit{Entropy $s(m)$ vs. magnetization $m$ on $7$-regular graphs.} Left panel is for assortative partitions, right panel for disassortative ones. 
     The colors are mapped such that same colored lines have the same $s(0)$.
     For the disassortative partition, dashed lines indicate the magnetization $\tilde{m}(H_{\rm dis})$, at which the exact number of minimally required edges of the bigger group to the smaller one is larger than the the number of edges the smaller group can maximally be connected to.
     } \label{fig:vary-m}
\end{figure}
In order to obtain the replica symmetric entropy, the fixed point equation~\eqref{eq:dreg-fixed-point} is iterated until convergence, adapting $\mu$ at every iteration in order to obtain the desired value of $m$. 
Because the problems are symmetric over $\pm m$, we only show and discuss results for positive $m$.

The resulting entropy as a function of $m$ is depicted in Figure~\ref{fig:vary-m} both for the assortative (left) and disassortative (right) case, for degree $d=7$.
The general behavior is consistent for various values of $d$.

We exactly recover the predicted equality at $m=0$ between the entropy of the assortative and of the  disassortative case, when the $H$ is reparameterized properly, $H_{\rm ass} = d+1-H_{\rm dis}$.
This correspondence is highlighted by the colors of the corresponding curves and shown separately in Figure~\ref{fig:comp-happy-unhappy}.
In both cases this is also the maximum of the entropy over all $m$, which means that if partitions exist most of them are (close-to) bipartitions. 
Beyond $m=0$, the entropy of the two assortative and disassortative problems behave very differently.
For the assortative partitions, at $m=1$ the entropy is always exactly zero, as there is only exactly one satisfying partition: The one with all nodes in one group, the partition $\langle V, \varnothing \rangle$. 
The two cases $H_{\rm ass} = 1,2$ are a bit special as for them the entropy never goes below zero.
For $H_{\rm ass}=1$ it is enough for a node to have one of their neighbors in the partition. 
So as long as the number of neighbors in each partition is even, one can easily place these pairs everywhere in the graph and find such partitions.
For $H_{\rm ass}=2$ the picture is similar, as one can place loops of size $\log(n)$, which are present in large $d$-regular graphs and which make the nodes in that partition $2$-assortative.
The problem looks different though for $H_{\rm ass} \ge 3$, because the both groups must contain at least a $3$-regular graph as a subgraph, as we discussed in Section~\ref{sec:locked-frozen}. All such subgraphs are with high probability extensively large, because they are expanders. Thus, there cannot exist partitions with magnetization close to $1$. We indeed see in Figure~\ref{fig:vary-m-ass} that for $H_{\rm ass} \ge 3$ in an interval around $m=1$ the entropy is negative. Under the assumptions of BP, the magnetizations for which the entropy remains positive are those for which $(H,m)$-partitions exist with high probability. The entropy at $m=0$ turning negative signals the absence of all non-empty $H$-partitions. 

For the disassortative case the entropy curves look a bit different at $m\ne 0$. They become negative at some value of $m$ for all $H_{\rm dis}$. They actually diverge to $-\infty$ before they reach $m = 1$. In fact, they diverge exactly at $\tilde{m}(H_{\rm dis})$, which is the imbalance between the partitions at which a valid disassortative partition becomes combinatorially impossible, i.e. at
\begin{align}
   \tilde{m}(H_{\rm dis}) = \min\Big\{m \in (0,1) \Big| \frac{m+1}{2} \cdot H_{\rm dis} <  (1-\frac{m+1}{2}) \cdot d \Big\} \, .
\end{align}
At this value, the number of crossing edges required to satisfy the nodes in the larger partition is smaller than the number of overall edges that are adjacent to the smaller partition.
This implies that even if we were allowed to choose the underlying $d$-regular graph structure, it would be impossible to construct a disassortative partition for such a large imbalance.

For a fixed $d$ and $m$, the entropy is decreasing as $H_{\rm ass}$ is increased and $H_{\rm dis}$ is decreased. 
Results with respect to different $d$ dependent on $H$ are shown in Figure~\ref{fig:vary-d}. 
%
%
Table \ref{tab:results-entropy} then summarizes the values of degree $d$ and parameters $H$ for which assortative partitions exist for zero magnetization.

\begin{figure}
    \centering
         \includegraphics[width=0.49\textwidth]{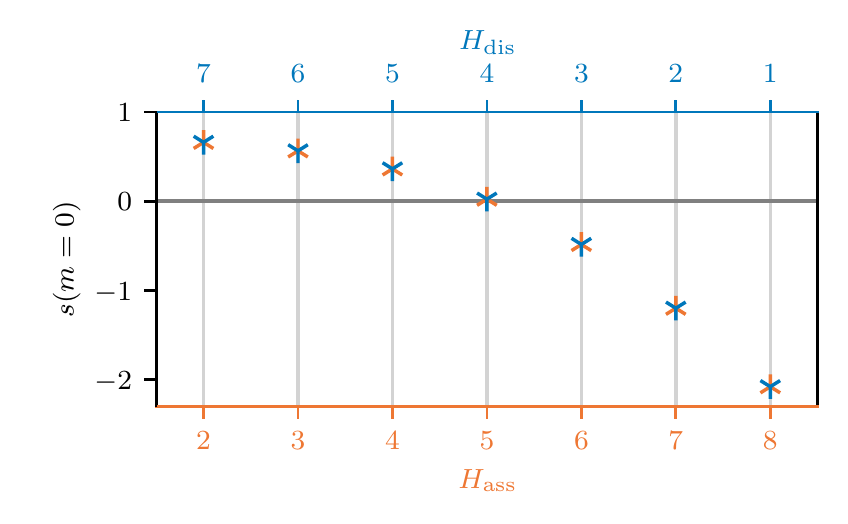}
    \caption{The RS entropy for $d=8$ and different values of $H$ for both assortative and disassortative partitions. 
    The magnetization is $m=0$ and the results are exactly the same for the corresponding reparameterization.}
    \label{fig:comp-happy-unhappy}
\end{figure}

\begin{figure}
     \centering
     \begin{subfigure}[b]{0.49\textwidth}
         \centering
         \includegraphics[width=\textwidth]{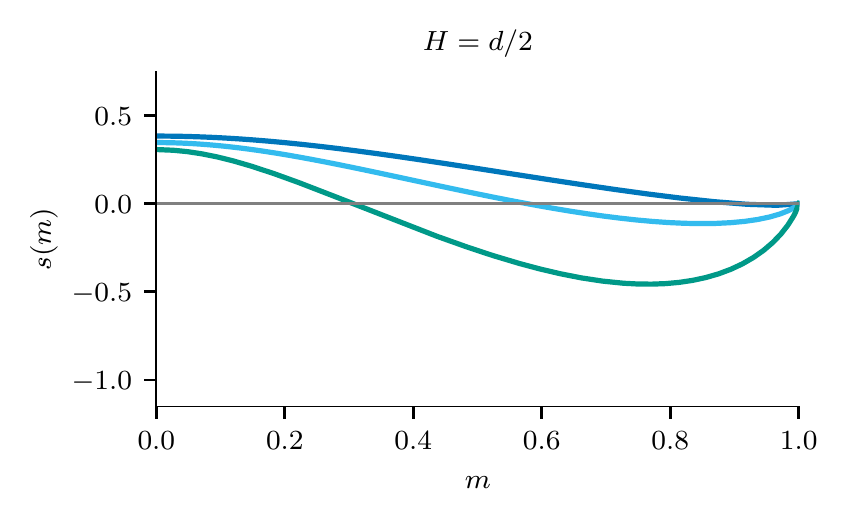}
           \caption{assortative}
        \label{fig:vary-d-ass}
     \end{subfigure}
     \begin{subfigure}[b]{0.49\textwidth}
         \centering
         \includegraphics[width=\textwidth]{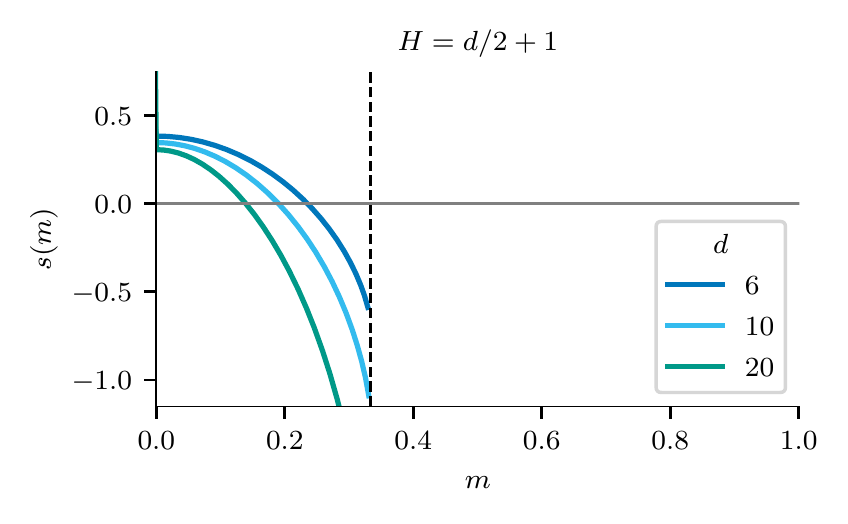}
           \caption{disassortative}
        \label{fig:vary-d-dis}
     \end{subfigure}
     \caption{\textit{Entropy $s(m)$ in the ($H,m$)-assortative partition problem on $d$-regular graphs.} The black dashed line for disassortative partitions shows the $\tilde{m}$ at which the exact number of minimally required edges of the bigger group to the smaller one is larger than the the number of edges the smaller group can maximally be connected to.
     } \label{fig:vary-d}
\end{figure}

%

\begin{table}[ht!]
    \centering
    \small{
    \begin{tabular}{llrrrrrr}\toprule   &   &    $s(m=0)$ &    $\Phi_{SP}$ & \multicolumn{2}{l}{EagerSWAP} &\multicolumn{2}{l}{GreedySWAP}  \\$d$ & $H$ &          &          &    $\cdot$ min      &   $\cdot$ median        &    $\cdot$ min      &  $\cdot$ median        \\\midrule
3  & 1 &                                     0.58286 &                                         0.0 &      0.0 &  0.00090 &      0.0 &      0.0 \\
   & 2 &                                     0.23500 &                                         0.0 &  0.00010 &  0.00160 &  0.00050 &  0.00145 \\
\hline 4  & 1 &                                     0.63598 &                                         0.0 &      0.0 &  0.00055 &      0.0 &      0.0 \\
   & 2 &                                     0.41766 &                                         0.0 &      0.0 &  0.00120 &      0.0 &      0.0 \\
   & 3 &  \textit{-0.03179} &  \textit{-0.03179} &  0.07790 &  0.08650 &  0.24240 &  0.24815 \\
\hline 5  & 1 &                                     0.66370 &                                         0.0 &  0.00010 &  0.00105 &      0.0 &      0.0 \\
   & 2 &                                     0.52812 &                                         0.0 &      0.0 &  0.00120 &      0.0 &      0.0 \\
   & 3 &                                     0.21983 &                                         0.0 &  0.00070 &  0.00195 &  0.00040 &  0.00205 \\
   & 4 &  \textit{-0.31188} &  \textit{-0.31188} &  0.50660 &  0.52290 &  0.31980 &  0.32365 \\
\hline 6  & 2 &                                     0.59522 &                                         0.0 &  0.00020 &  0.00150 &      0.0 &      0.0 \\
   & 3 &                                     0.38414 &                                         0.0 &      0.0 &  0.00115 &      0.0 &      0.0 \\
   & 4 &  \textit{-0.00013} &  \textit{-0.00013} &  0.06080 &  0.06385 &  0.18000 &  0.19505 \\
   & 5 &  \textit{-0.60130} &  \textit{-0.60130} &  0.78840 &  0.79760 &  0.56680 &  0.58215 \\
\hline 7  & 2 &                                     0.63586 &                                         0.0 &  0.00020 &  0.00120 &      0.0 &      0.0 \\
   & 3 &                                     0.49266 &                                         0.0 &  0.00020 &  0.00070 &      0.0 &      0.0 \\
   & 4 &                                     0.21369 &                                         0.0 &  0.00040 &  0.00170 &  0.00040 &  0.00150 \\
   & 5 &  \textit{-0.23625} &  \textit{-0.23625} &  0.37230 &  0.38265 &  0.24170 &  0.25465 \\
\hline 8  & 3 &                                     0.56457 &                                         0.0 &  0.00010 &  0.00130 &      0.0 &      0.0 \\
   & 4 &                                     0.36259 &                                         0.0 &  0.00020 &  0.00045 &      0.0 &      0.0 \\
   & 5 &   \textbf{0.02302} &   \textbf{0.02302} &  0.04910 &  0.05280 &  0.15320 &  0.15710 \\
   & 6 &  \textit{-0.48472} &  \textit{-0.48472} &  0.68660 &  0.69545 &  0.48060 &  0.48830 \\
\hline 9  & 3 &                                     0.61193 &                                         0.0 &  0.00010 &  0.00120 &      0.0 &      0.0 \\
   & 4 &                                     0.46712 &                                         0.0 &      0.0 &  0.00085 &      0.0 &      0.0 \\
   & 5 &                                     0.21037 &                                         0.0 &  0.00090 &  0.00110 &  0.00040 &  0.00140 \\
   & 6 &  \textit{-0.18361} &  \textit{-0.18361} &  0.29300 &  0.30285 &  0.20340 &  0.21440 \\
\hline 10 & 4 &                                     0.54044 &                                         0.0 &      0.0 &  0.00090 &      0.0 &      0.0 \\
   & 5 &                                     0.34723 &                                         0.0 &  0.00010 &  0.00085 &      0.0 &      0.0 \\
   & 6 &   \textbf{0.04004} &   \textbf{0.04004} &  0.04010 &  0.04285 &  0.12750 &  0.13760 \\
   & 7 &  \textit{-0.40318} &  \textit{-0.40318} &  0.59110 &  0.60295 &  0.40910 &  0.42655 \\
\hline 11 & 4 &                                     0.59148 &                                         0.0 &  0.00010 &  0.00075 &      0.0 &      0.0 \\
   & 5 &                                     0.44753 &                                         0.0 &      0.0 &  0.00085 &      0.0 &      0.0 \\
   & 6 &                                     0.20829 &                                         0.0 &  0.00060 &  0.00140 &  0.00050 &  0.00235 \\
   & 7 &  \textit{-0.14532} &  \textit{-0.14532} &  0.24490 &  0.24765 &  0.17870 &  0.19435 \\
\hline 12 & 5 &                                     0.52079 &                                         0.0 &  0.00010 &  0.00100 &      0.0 &      0.0 \\
   & 6 &                                     0.33557 &                                         0.0 &  0.00040 &  0.00120 &      0.0 &      0.0 \\
   & 7 &   \textbf{0.05308} &   \textbf{0.05308} &  0.03480 &  0.03760 &  0.11620 &  0.12010 \\
   & 8 &  \textit{-0.34335} &  \textit{-0.34335} &  0.51240 &  0.52095 &  0.36910 &  0.37435 \\
\bottomrule
\end{tabular}

}
    \caption{Results for the assortative partition problem for different $d$ around the relevant phase transitions, when the solution space becomes frozen at $d>H/2$.  
    Cases in the frozen phase are marked with bold face. The phase where typically no solutions exist (the entropy is negative) is marked in italic face.
    The algorithms were both run on the same 10 randomly sampled $d$-regular graphs of size $n=10,000$ and each run for at max $20,000$ iterations. 
    The minimum and median percentage of $H$-disassortative nodes in the resulting partition is reported.}
    \label{tab:results-entropy}
\end{table}

\subsection{Stability of the replica symmetric solution}
\label{sec:stability}

As anticipated, the Belief Propagation approach is not always exact even on large tree-like graphical models. A necessary condition for correctness of this approach is convergence of BP to a fixed point on a single full graph. Such a convergence can be checked by running BP on a given large graph initialized at the RS fixed point and perturbing the local messages slightly by some random noise~$\varepsilon$. 

Since running the BP on large sets of $d$-regular random graphs is computationally heavy, we resort to approximating this behavior on a full graph using population dynamics \cite{zdeborovaStatisticalPhysicsHard2008a}.
This means, for updating an iterate at time $t$ we keep an array of $N$ messages $[\chi_i]_{i=1,...N}$. 
An update is constructed by randomly sampling $d-1$ neighbors for each new message in the array and applying the fixed point iteration \eqref{eq:final-fixed-point} on it.
We also adapt $\mu$ appropriately to reach the desired value of magnetization $m$.
If such an iteration of population dynamics, initialized by a noise-permuted version of the BP fixed point converges back, this fixed point is stable. 
Otherwise, it is unstable and the RS solution is not valid.

In running the population dynamics initialized at the BP fixed point with some small noise added, we checked the stability of this fixed point for $d\in[3,20]$ for bipartitions and for varying $m$ when $d \in [3,10]$.
In all cases the BP was stable.

\section{Asymptotic equivalence between the assortative and disassortative partition problem at \texorpdfstring{$m=0$}{m=0}}\label{sec:equality-explanation}

In the previous section we derived that on random regular graphs under the assumptions of the replica symmetry the entropy of balanced (i.e. $m=0$) assortative partitions is in the large size limit the same as the entropy of the disassortative partitions. 

One should note that on a single instance of a random regular graph the knowledge of a assortative partition does not give us any clear algorithmic way to obtain a disassortative partition, and vice versa. 
Thus the equality of the associated entropy should appear as a rather surprising property.
It is also worth noting that this property does not hold for non-balanced partitions $m\neq 0$ as shown in examples in Section~\ref{sec:bp-results}. 
We also do not expect this asymptotic relation between the two problems to hold on graphs with varying degree as in that case partitions of balanced size but not-balanced average degree could cause differences. 
We do not expect this property to be exactly true for the number of assortative and respectively dissasortative partitions on finite size graphs, but only for the entropies in the limit of large size. 

On the other hand we do expect and conjecture this equivalence to hold even if replica symmetry breaking needs to be taken into account in order to evaluate the value of the concerned entropy. We hence conjecture: 

\begin{conj}
    In the limit $n\to \infty$, on random $d$-regular graphs, the entropy of balanced assortative partitions at $H_{\rm ass}$ is equal to the entropy of dis-assortative partitions at $H_{\rm dis} = 1+d-H_{\rm ass}$.    
\end{conj}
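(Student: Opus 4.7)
The plan is to lift the replica-symmetric reflection symmetry already exhibited in equations \eqref{eq:equality1}--\eqref{eq:equality2} to arbitrary levels of replica symmetry breaking. Introduce the involution $\sigma$ on the message space defined by $(\sigma\chi)_{x,y} := \chi_{-x,y}$. At $\mu=0$, one checks directly from the BP update \eqref{eq:dreg-fixed-point}, using the spin-flip symmetry $\chi_{a,b} = \chi_{-a,-b}$ valid at $\mu=0$, that $\sigma$ bijects assortative fixed points at $H_{\rm ass}$ with disassortative fixed points at $d+1-H_{\rm ass}$: the change of summation variable $s = d-1-r$ turns the truncated binomial sum of the assortative update into that of the disassortative update, and the boundary indices match precisely when $H_{\rm dis} = d+1-H_{\rm ass}$ (because $\mathbb{1}(x=y) + \mathbb{1}(x=-y) = 1$). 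Moreover, $Z_{\rm node}$ and $Z_{\rm edge}$ are manifestly invariant under $\sigma$, so the Bethe free entropy is preserved. Since the survey propagation equations of Section~\ref{sec:survey-popagation-theory} and the full $k$-RSB hierarchy are obtained by lifting the BP update to iterated distributions of messages, $\sigma$ lifts to a measure-preserving bijection on the space of SP (and $k$-RSB) fixed points that preserves the complexity functional and the Parisi free entropy at every Parisi parameter. This would deliver the conjectured equality at every level of RSB at once.

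I would complement the cavity argument with an independent annealed derivation. Fix a balanced partition $(V_+,V_-)$ with $|V_+|=|V_-|=n/2$ and compute in the configuration model the probability that it is $H_{\rm ass}$-assortative. Each half-edge rooted at $v$ pairs within its own group with probability $1/2+O(1/n)$, and the joint law of the same-group degrees $(k_v)_v$ is invariant, to leading exponential order, under the coordinatewise involution $k_v \mapsto d-k_v$. The event $\{k_v\ge H_{\rm ass}\ \forall v\}$ is then the image under this involution of the event $\{k_v\le d-H_{\rm ass}\ \forall v\}$, which is precisely $(d+1-H_{\rm ass})$-disassortativity. Multiplying by the identical combinatorial factor $\binom{n}{n/2}$ of balanced partitions yields equality of the two annealed entropies, which by the remark at the end of Section~\ref{sec:BP-dreg} coincides with the RS entropy at $m=0$.

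The hard part is passing from these cavity-level and annealed-level equalities to a rigorous quenched statement when replica symmetry is broken, as in the frozen-1RSB phase identified in Section~\ref{sec:locked-frozen}, where the first-moment bound is generically strict. A Guerra-type interpolation between the two problems along a joint family that continuously rotates the sign of $J$ together with $H$ could in principle produce matching upper and lower bounds on the difference of quenched entropies, but establishing the requisite monotonicity on sparse random $d$-regular graphs is notoriously delicate. An alternative would be a direct measure-preserving correspondence on a joint configuration ensemble over $(G,\mathbf{x})$ pairs: this is essentially immediate at the level of the disorder distribution itself, but producing such a correspondence at the level of typical configurations on a fixed graph instance appears open and would be the principal technical hurdle.
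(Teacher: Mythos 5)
Your reasoning is sound at the heuristic level the paper itself operates at (the statement is a conjecture, and the paper offers an argument, not a proof), but your route differs from the paper's. Your first step --- the involution $(\sigma\chi)_{x,y}=\chi_{-x,y}$ combined with the spin-flip symmetry of the $m=0$ branch, with the index change $s=d-1-r$ matching the truncated binomial sums when $H_{\rm dis}=d+1-H_{\rm ass}$ --- is exactly the mechanism behind the paper's replica-symmetric identities \eqref{eq:equality1}--\eqref{eq:equality2}, and your checks of $Z_{\rm node}$, $Z_{\rm edge}$ are correct. Where you diverge is in how the equality is pushed beyond replica symmetry: you propose lifting the message-space involution level by level through the SP/$k$-RSB hierarchy (and you add an independent annealed first-moment symmetry in the configuration model), whereas the paper (Section~\ref{sec:equality-explanation}) argues once and for all via a gauge transformation $J_{ij}=\sigma_i\tilde J_{ij}\sigma_j$, $x_i=\tilde x_i\sigma_i$ that pushes the difference in couplings to the boundary of a rooted tree and invokes the cavity-method assumption of decomposition into extremal Gibbs states with decaying point-to-set correlations. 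The gauge argument buys generality: it covers every level of RSB simultaneously without writing the hierarchy, covers the spin-glass couplings as well, and makes transparent exactly where $m=0$ and $d$-regularity enter (spin flips must not shift the magnetization and degrees must not correlate with spins). Your argument buys explicitness: the 1RSB instance of your lifting is in fact visible in the paper's survey-propagation results of Section~\ref{sec:survey-popagation-theory}, where the frozen warnings of the two problems map onto each other by exchanging crossing and non-crossing edges; and your annealed computation is a genuine independent check, though note it only controls the first moment, which at $m=0$ on regular graphs coincides with the RS entropy and is merely an upper bound on the quenched entropy (as remarked at the end of Section~\ref{sec:BP-dreg}), so it cannot by itself address the RSB regime. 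Two small caveats: your lift of $\sigma$ to a measure-preserving bijection of SP/$k$-RSB fixed points preserving the reweighted free entropies is asserted rather than verified (it is plausible because $\sigma$ is a relabeling that preserves the local normalizations, but it is precisely what the paper's gauge argument delivers for free), and the spin-flip symmetry $\chi_{a,b}=\chi_{-a,-b}$ you invoke is a property of the relevant symmetric fixed point at $\mu=0$, not of every fixed point --- the same implicit restriction the paper makes. Your closing assessment of the remaining quenched gap is accurate and matches the paper's own status: neither your argument nor the paper's constitutes a proof of the conjecture.
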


This conjecture is reminiscent of an analogous one made in \cite{zdeborovaConjectureMaximumCut2010} for the size of the maximum cut and the size of minimum bisection on random $d$-regular graphs.
We will now present an argument why this conjecture is expected to hold beyond the replica symmetry assumption.
The assumptions we are making are common to the replica and the cavity methods of spin glasses \cite{mezard1987spin}.
This argument applies to $d$-regular random graphs, that are locally tree-like in the large size limit, and balanced $m=0$ partitions.
Similar to \eqref{eq:prob-dist} we use the following definition for the partition problem constraints 
$$
     \sum_{j \in \partial i} J_{ij} x_i x_j \ge 2H_{\rm ass} - d \quad \forall i\, , 
$$
where $J_{ij} = 1$ for all $(ij)\in E$ in the assortative case, and $J_{ij} = -1$ for all $(ij)\in E$ for the dis-assortative case taking $H_{\rm dis} = 1+d-H_{\rm ass}$.
We could even define a third \textit{spin-glass} version of the partition problem where $J_{ij} \in \{ \pm 1\}$ chosen at random for every edge $(ij)\in E$.
The difference between the assortative, the disassortative and the spin glass partitions hence boils down to the different choice of the parameters $J_{ij}$.

We now consider that whenever the cavity method arguments are used, there is always an underlying assumption that the associated probability measure, \eqref{eq:prob-dist} in our case, can be decomposed into extremal Gibbs measures, so called states, and each of these states is such that the so called point-to-set correlations decays to zero within the state, meaning that a root of a tree has no recollection of what is going on at a far away boundary if that boundary is chosen from the extremal Gibbs measure itself.
The assumptions of the existence of such a decomposition is at the core of the cavity/replica method.
With this in mind we now realize that the difference in $J_{ij}$'s that distinguishes between the assortative, the disassortative and the spin glass partitions can be pushed to the boundary of a rooted tree graph by applying the following change of variables: 
$$
      J_{ij} = \sigma_i  \tilde J_{ij} \sigma_j\,   \quad \quad 
      x_i  = \tilde x_i \sigma_i\, ,
$$
where $\sigma_i \in \{\pm 1\}$ are chosen in such a way that $\tilde J_{ij} = 1$ for all $(ij)$.
This can be easily done by choosing $\sigma$'s recursively from the root, since we assumed $m=0$ for the $x$-variables, and flipping all or a random portion of them by $\sigma_i$ will not change the magnetization of the $\tilde x$ variables. This part of the argument would fail if we wanted to make it for $m \neq 0$.
This is also where the requirement of regular graphs enters, because for graphs with inhomogeneous degrees the degrees could be correlated to the spin values and this correlation could again break the argument that the system before and after the change of variables will have the same properties.  
With the tilde-variables we transformed the problem on the tree into the assortative case.
The difference was pushed to the boundaries which, by assumptions of the method, does not influence the solution obtained from the cavity/replica method. 
All in all, we showed that as long as the problem is on random-regular graphs and has $m=0$ then the properties of partitions are the same for the assortative and disassortative and spin glass case. 
We saw this manifest for the disassortative and assortative partitions already in Section \ref{sec:bp-results}.
In the following section, we will see that the results we show for the partition problems for $d \to \infty$ were also discovered in the context of spin glasses.

\section{Large \texorpdfstring{$d$}{d} asymptotics for assortative bipartitions}\label{sec:large-d-derivation}
To make the relevant properties of the entropy visible at large $d \to \infty$ with $m=0$, we introduce $h$ such that $\minH = d/2 + \minh \sqrt{d}$.
Then in the zeroth order the entropy at this limit is given by 
\begin{align}
    \lim_{d \to \infty} \Phi_{\text{RS}}(h,d) =  \Phi_{\text{RS}}(h)= -2 c^2 + \log\left(\text{Erfc}\left[\sqrt{2}(c-h)\right]\right) + O(1/\sqrt{d})\label{eq:large-d-its}
\end{align}
under the condition that $c(h)$ fulfills the following fixed point equation
\begin{align}
    c =\frac{\exp(-2(h-c)^2)}{\sqrt{2\pi} \cdot \text{Erfc}\left[\sqrt{2}(h-c)\right]}\, .
\end{align}
This formula can be derived from the fixed point criterion in \eqref{eq:dreg-fixed-point} and the formula for entropy \eqref{eq:entropyyy} together with the assumption that the magnetization is fixed.
We show the detailed derivation in the Appendix.
For example, from this we can calculate that the transition from positive to negative entropy occurs at $\minh_{\max}=0.17566$.
The quality of the expansion for general $H$ is already very good at $d=1000$, as shown in Figure~\ref{fig:large-d-extra} and exactly agrees with the scaling evident in the simulations as shown in Figure~\ref{fig:large-d}.
Another interesting point is $h=0$, i.e. $H=d/2$, where the entropy is $s(m=0) = 0.19923$.
Note that both these numbers are known in the spin glass literature. The entropy $s(m=0) = 0.19923$ is known to be the entropy of single-spin-flip-stable states in the SK model of spin glasses \cite{brayMetastableStatesInternal1981}. The maximum gap $h_{\rm max}=0.17566$ was computed in \cite{trevesMetastableStatesAsymmetrically1988a} as the largest margin of stability of metastable states in a spin glass. This follows from the equivalence between the assortative, the disassortative and the spin-glass partitions discussed in Section \ref{sec:equality-explanation}. 

\begin{figure}[ht!]
     \centering
         \includegraphics[width=0.49\textwidth]{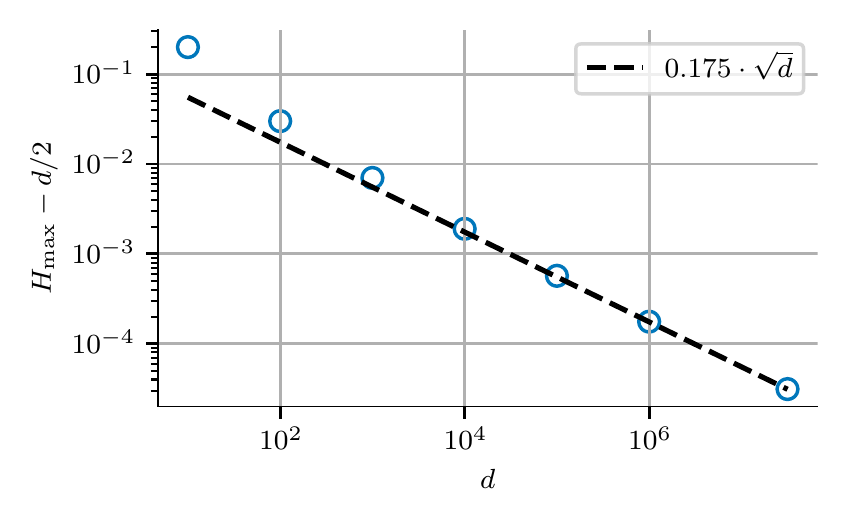}
        \caption{Large $d$ behavior of $H_{\max}$, the value of $H$ for which the entropy $s(0)$ first becomes smaller than zero. The gray dashed line is the constant derived analytically and agrees with the empirically calculated $H_{\max}$ as $d$ grows larger.}
        \label{fig:large-d}
\end{figure}

\begin{figure}[ht!]
     \centering
     \begin{subfigure}[b]{0.49\textwidth}
         \centering
         \includegraphics[width=\textwidth]{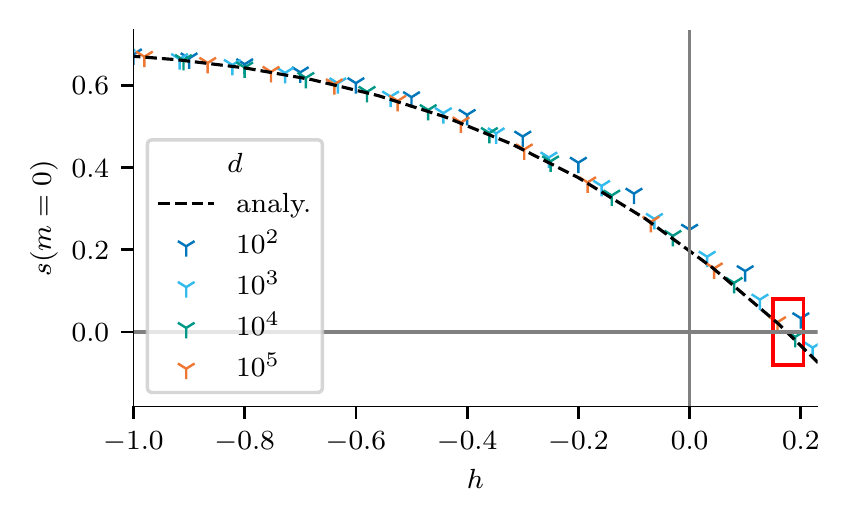}
     \end{subfigure}
     \hfill
      \begin{subfigure}[b]{0.49\textwidth}
         \centering
         \includegraphics[width=\textwidth]{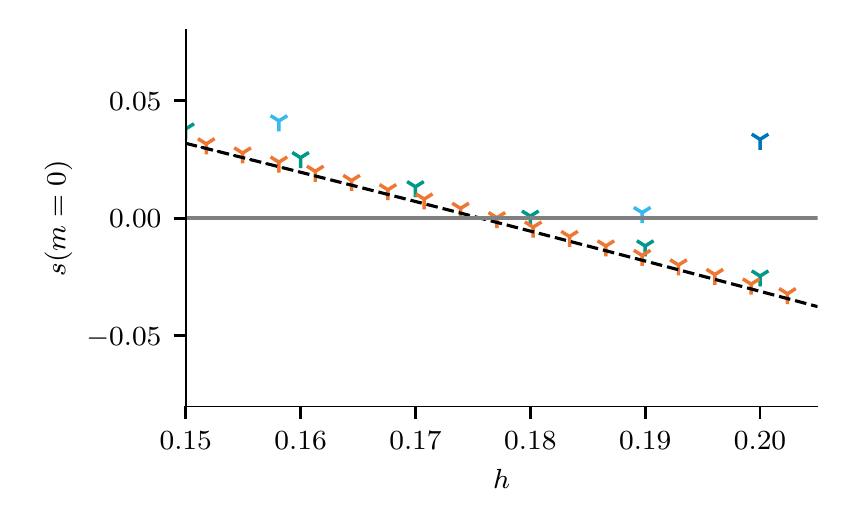}
     \end{subfigure}
        \caption{Large $d$ behavior of the entropy as a function of $h$, where $H=d/2+h \sqrt{d}$. Comparison of the analytic solution at $d \to \infty$ with empirical results of the fixed point iteration for large $d=10^2,10^3,10^4,10^5$. }
        \label{fig:large-d-extra}
\end{figure}

\section{Survey Propagation}\label{sec:survey-popagation-theory}

In Section \ref{sec:stability} we discussed the stability of the replica symmetric solution as a necessary condition for its correctness. But this is not a sufficient condition. It is predicted in statistical physics literature that the replica symmetric solution is asymptotically correct on tree-like factor graphs if the solution of the one-step replica symmetry breaking (1RSB) reduces back to it \cite{mezard2009information}. We thus investigate the 1RSB solution in this section for magnetization $m=0$.

By these means we can also resolve a problem from Section~\ref{sec:locked-frozen}, where we discussed to what extent the flip of a single node in a satisfying solution influences the assortativity of a fraction of other nodes in the graph for different values of $H$.
When an extensive fraction of the graph was directly influenced, we argued that this implies that valid solutions to the problem are isolated, i.e. spaced far apart from another. Such structure of the space of solutions is called the frozen 1RSB \cite{martinFrozenGlassPhase2004,aubinStorageCapacitySymmetric2019}. 
This would make it difficult to algorithmically discover them and manifest the computationally hard phase \cite{zdeborovaConstraintSatisfactionProblems2008,gamarnik2021overlap}. Because our previous argument from Section ~\ref{sec:locked-frozen} is inconclusive for cases when $H= d/2 +1$, we develop the 1RSB for our problem  also in order to settle whether the frozen 1RSB occurs in these cases.

The idea behind 1RSB is often described using the concept of reconstruction on trees \cite{zdeborovaConstraintSatisfactionProblems2008}.  
In reconstruction on trees, one considers one of the valid partitions on a given tree and then removes all information about the membership except at the boundary, the leaves of the tree.
The question is then, whether this boundary information is sufficient to determine something about which partition the root belonged to. A stronger requirement would be to determine the value of the root exactly based on the boundary, this is called the naive reconstruction in the literature \cite{zdeborovaConstraintSatisfactionProblems2008}. 
This can be answered by propagating constraints that the given leaves impose on their parents layer-wise up the tree until the root.
In some cases these implications may be inconclusive and exact reconstruction of the root might not be possible.
This happens, if there are many solutions with the same boundary but different values of the root.
If exact reconstruction is possible though, the information about the roots value is contained in the boundary.
Therefore, some variables between the root and the boundary must also be determined by the boundary, as otherwise the information would be lost.
These special determined variables are called frozen, as in all solutions that have the same boundary, these frozen variables must be the same.
Computing the number of these frozen variables in the tree for some $H$ gives us information about how large the influence of a root flip is on the tree.

We can use this idea to compute the fraction of frozen variables on tree-like random graphs.
In this case, the naive reconstruction process can be approximated by the warning propagation algorithm (WP).
Given an initial boundary condition warning propagation finds the fixed point to which a reconstruction process on the full graph would converge to under the replica symmetric assumption.
The survey propagation algorithm then counts the logarithm of the number of those fixed points on random graphs containing frozen variables by running BP on the warning propagation constraints.
In the following we present both algorithms for random $d$-regular graphs and check if the result for the frozen 1RSB maps back to the replica symmetric solution from the BP to check that it is asymptotically correct.
This also tells us whether the fraction of frozen variables is extensive.

\paragraph{Warning propagation}
With warning propagation we can simulate the naive reconstruction process on random $d$-regular graphs.
Given initial membership assignments to partitions on the boundary we use the constraints of the (dis)assortative partition problem to infer which membership assignments on the remaining graph
can be freely assigned and which ones are frozen.
Recall that the variable $(x_i,x_j)$, represents the group memberships of the nodes $i$ and $j$ in either $V_-$ or $V_+$.
It can take on any of the four different values in $\{\pm 1 \}^2$, representing crossing and non-crossing edges between or within the two groups.
In this setting, a warning $w^{i\to j}$ is a message from $i$ to $j$ that tells node $j$ which values of the variable $(ij)$ are acceptable for $i$.
An assignment is acceptable for $i$ if it allows $i$ to be $H$-assortative given the warnings incoming to $i$ from all its other neighbors $k \in \partial i \setminus j$.
For example, a warning that node $i$ sends to $j$ could be $w^{i \to j}=\vcenteredinclude{\includegraphics[width=0.06\textwidth]{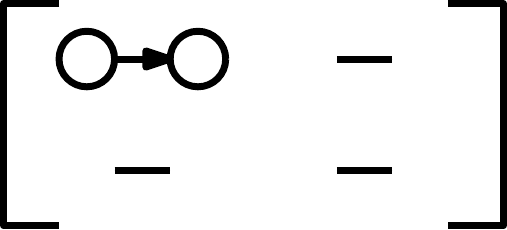}}$  (the color white denotes an assignment to partition $V_-$ and black to $V_+$) .
This tells $j$ that the only acceptable value of $(x_i,x_j)$ for $i$ is  \vcenteredinclude{\includegraphics[width=0.03\textwidth]{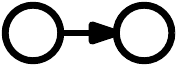}}.
Such a situation could occur when $i$ has to have at least $H$ neighbors in its own group, but it has only $H-1$ neighbors from its own group in the neighbors $\partial i \setminus j$.
Then the only way to fulfill this constraint is to require $j$ to be in the same partition as itself. 
Thus, $j$ is an example of a frozen variable since it cannot be chosen freely.
On the other hand, if $i$ does not have a specific requirement for $j$, for example because it already has at least $H$ neighbors in its own group without counting $j$, it sends the \textit{don't care}-warning to $j$, which we denote as $w^{i \to j}= \vcenteredinclude{\includegraphics[width=0.06\textwidth]{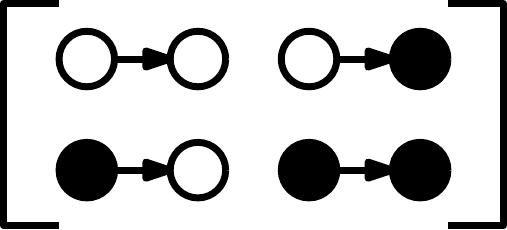}}$.
In this case $j$ can be chosen freely, i.e. it is not frozen.
While this symbolic notation is instructive, in the following we use the more succinct notation where $w^{i \to j} \in \smat{\pm}{\pm}{\pm}{\pm}$, so that a $w^{i \to j}_{x_i,x_j} = +$ denotes that the corresponding assignment $(x_i,x_j)$ is allowed and $=-$ that it is not.
Using this notation we have that $\vcenteredinclude{\includegraphics[width=0.06\textwidth]{survey-propagation.pdf}} = \smat{+}{+}{+}{+}$ and $\vcenteredinclude{\includegraphics[width=0.06\textwidth]{survey-propagation-only-one.pdf}}=\smat{+}{-}{-}{-}$.
Since $i$ can select any subset of the four possible assignments $\{\pm 1\}^2$, the warnings can take on 16 different values.
Note that the ``don't care" warning is special, as it is the only warning that allows $j$ to choose its assignment freely.\\
Under these conditions, we can compute the warning $w^{i \to j}$ given the warnings that $i$ receives from all other neighbors as
\begin{align}
    w^{i \to j}_{x_i,x_j} =  \begin{cases}
    +,& \text{if } \exists \{x_k\}_{k \in \partial i \setminus j}:  \mathcal{C}_H\left( \sum_{j \in \partial i} J_{ij} x_i x_j \right)=1\text{ and  }  w^{k \to i}_{x_k,x_i} {\small = +} \, ,\\
    -,              & \text{else}  \, . 
\end{cases}\label{eq:update-wp}
\end{align}
Note that by assigning $J_{ij}$ correctly we can obtain the warning propagation for either the assortative or disassortative case as in \eqref{eq:prob-dist}. 
It allows only those membership assignments $(x_i,x_j)$ for $(ij)$, for which $i$ could make itself $H$-(dis)assortative by selecting the appropriate allowed assignments from the warnings it received from its neighbors $k \in \partial i \setminus j$.
The first part of the constraint ensures that those assignments are allowed in the warnings $i$ received from those neighbors.
The $\mathcal{C}_H$ as defined in \eqref{eq:constraint} ensures that the local constraint on $i$ for (dis)assortativity is fulfilled.
For brevity of notation we will denote the warning propagation update as 
\begin{equation}
    w^{i \to j}_{x_i,x_j} = {\cal F} (\{ w^{k \to i}_{x_k,x_i}\}_{k \in \partial i\setminus j} ) \, . 
\end{equation}

On a tree with the leaves set to ``conclusive'' warnings where only one assignment is allowed, iterating \eqref{eq:update-wp} up until the root would allow us to exactly tell which values the root would be allowed to take on. Another relevant initialization in the context of frozen 1RSB is one where small noise is introduced on the boundary in the form of a small fraction of ``don't care" warnings. If such an initialization still implies the value of the root, we speak of the \textit{small noise reconstruction} \cite{zdeborovaConstraintSatisfactionProblems2008}, which then implies the presence of frozen 1RSB where an extensive fraction of nodes needs to change in a tree-like graph upon the flip of one node . 

\paragraph{Survey propagation}
We can use survey propagation to count the number of warning propagation fixed points and their entropy for random graphs.
A survey $\eta^{i \to j}$ is a probability distribution over the 16 different warnings that $i$ could send to $j$.
This is analogous to how we derived the belief propagation equations and the associated free entropy in Section~\ref{sec:rs-1}.
SP messages are also probability distributions over variable assignments and therefore $\Phi_{\rm SP}$ allows us to recover the entropy of the warning propagation solutions/fixed points by the same arguments.

For example, $\eta^{i \to j}\smat{+}{+}{+}{+}$ is the probability that $i$ sends a ``don't care" warning to $j$.
The surveys can then be updated as
\begin{align}\label{eq:32}
    \eta^{i \to j}_{w^{i\to j}} = \frac{1}{Z^{i \to j}} \sum_{\{w^{k\to i}\}_{k \in \partial i}} \ind\left(w^{i \to j}_{x_i,x_j} = {\cal F} (\{ w^{k \to i}_{x_k,x_i}\}_{k \in \partial i\setminus j} )      \right) \prod_{k \in \partial i} \eta^{k \to i}_{w^{k\to i}} \, , 
\end{align}
where the normalizing constant $Z^{i \to j}$ ensures that $\eta$ is a probability distribution.
At a fixed point of this iteration, one can compute the entropy of the warning propagation fixed point, that is referred to as the complexity $\Phi_{SP}$ in the literature,  as
\begin{align}
    n\Phi_{\text{SP}} &= \sum_{i \in V}\log(Z_i) - \sum_{(ij) \in E}\log(Z_{ij})\, , \\
    Z_i &= \sum_{\{w^{k\to i}\}_{k \in \partial i}} \ind\left(w^{i \to j}_{x_i,x_j} = {\cal F} (\{ w^{k \to i}_{x_k,x_i}\}_{k \in \partial i\setminus j} ) \right) \prod_{k \in \partial i}\eta^{k\to i}_{w^{k\to i}}\, , \label{eq:huhuhu}\\
    Z_{ij} &= \sum_{\{w^{i\to j},w^{j\to i}\}} \ind\Big(\exists (x_i,x_j): w^{i\to j}_{x_i,x_j} = w^{j \to i}_{x_j,x_i} = + \Big)\eta^{i \to j}_{w^{i\to j}} \eta^{j \to i}_{w^{j\to i}}\, . 
\end{align}
Some examples of how the constraint in \eqref{eq:huhuhu} for the specific case of $H$-disassortative partitions is resolved are shown in Figure~\ref{fig:explain-compatability}.
In analogy to the BP in Section~\ref{sec:BP-dreg} we can easily derive simplified update rules for random $d$-regular graphs under the assumption that $\eta = \eta^{i\to j}$ $\forall i,j$ and apply it to some initial distribution $\eta_0$.

\begin{figure}[ht!]
    \centering
         \includegraphics[width=0.24\textwidth]{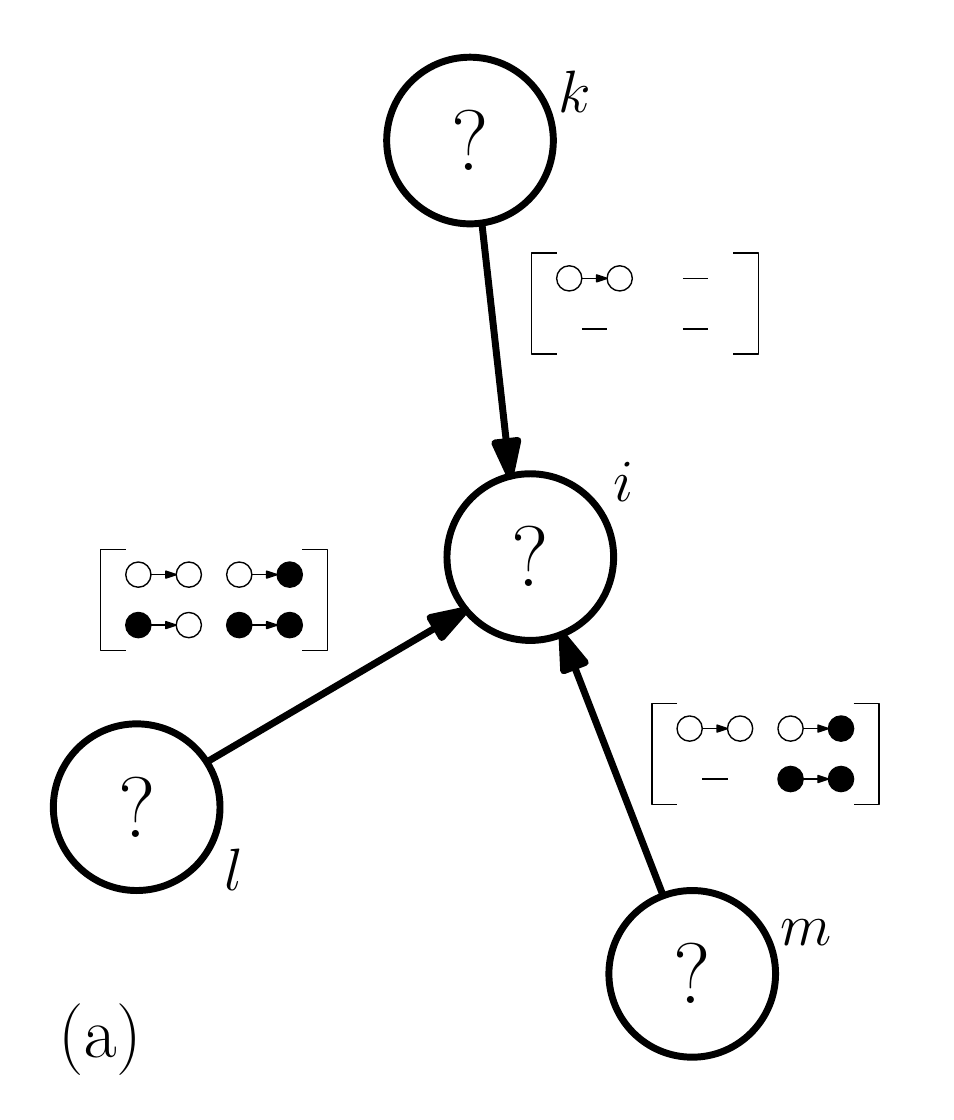}
         \includegraphics[width=0.24\textwidth]{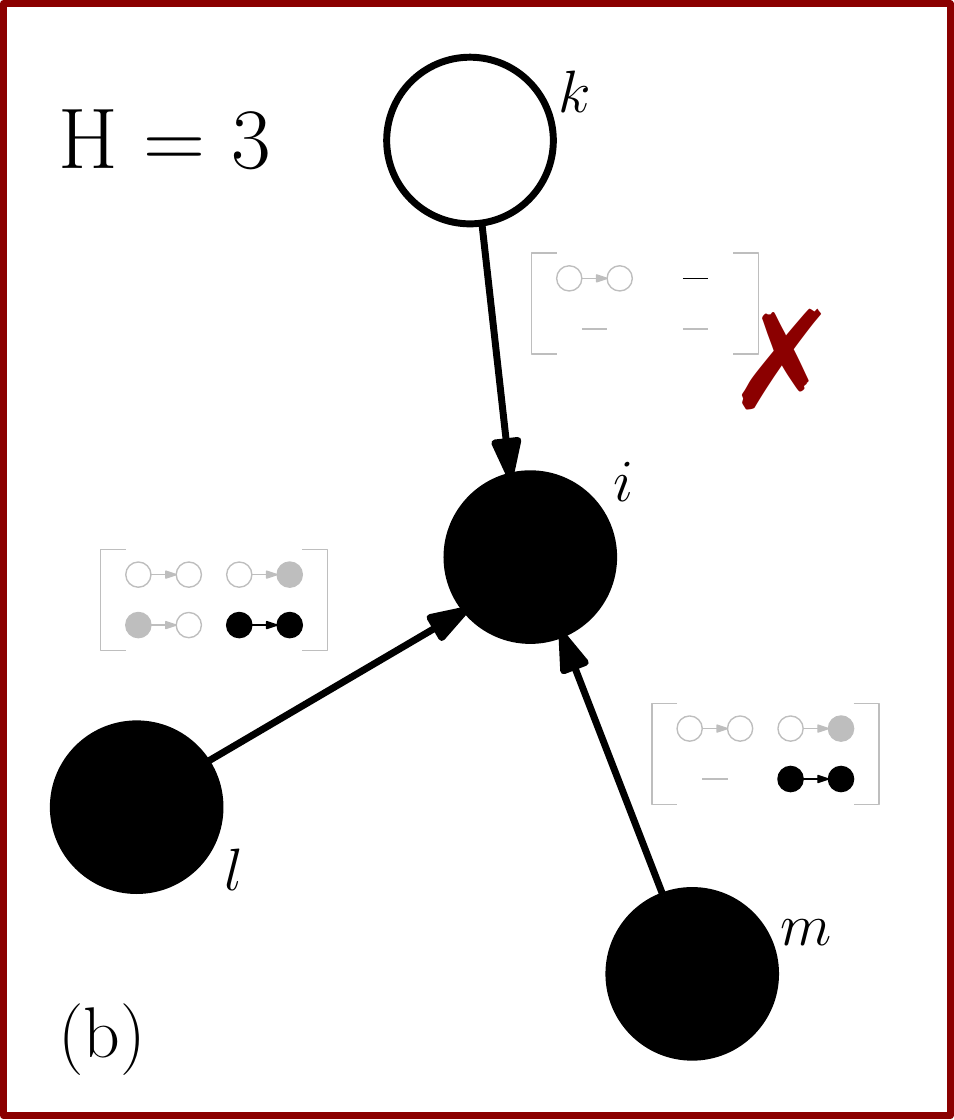}
         \includegraphics[width=0.24\textwidth]{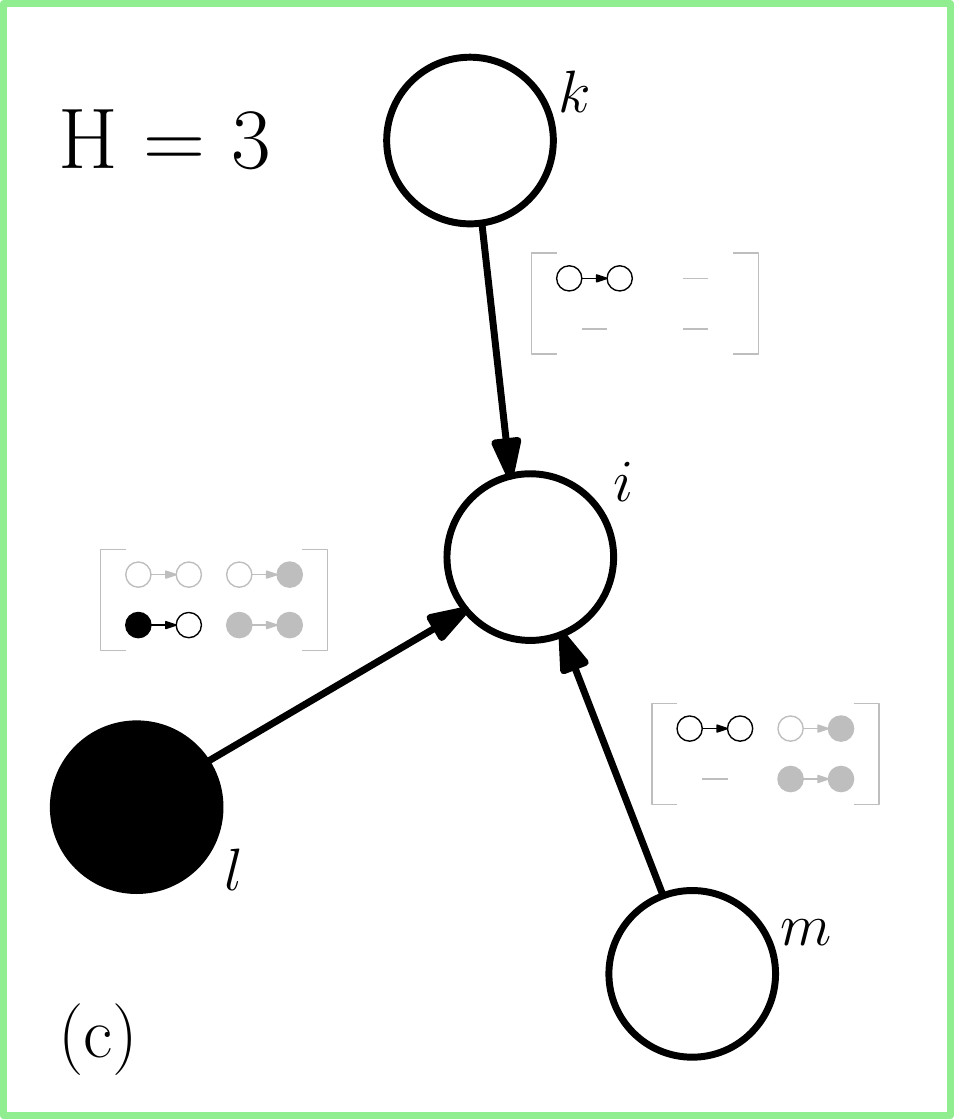}
         \includegraphics[width=0.24\textwidth]{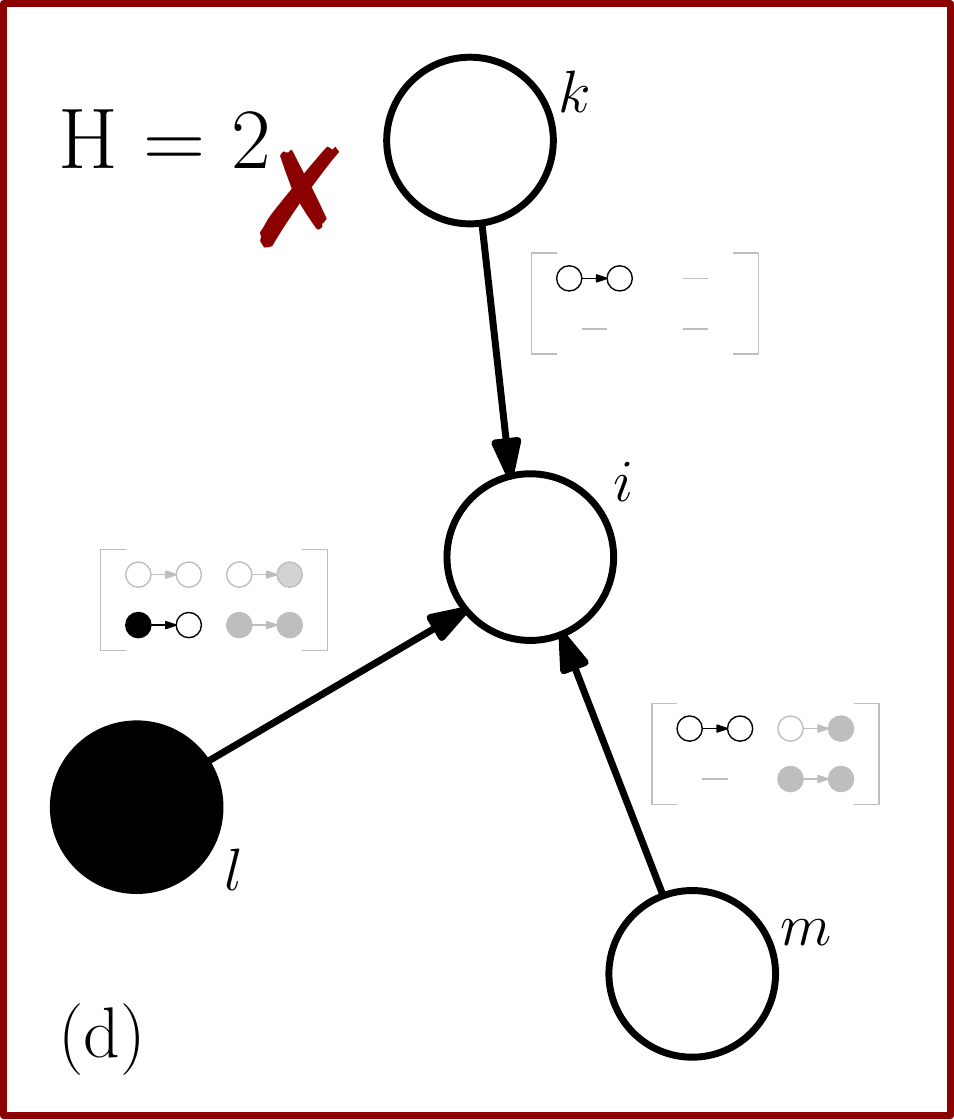}
    \caption{Illustration for finding a compatible membership assignment for an $H$-disassortative partition under the given incoming warnings $\{w^{j\to i }\}_{j \in \partial i}$ in \eqref{eq:huhuhu}. \textit{(a)} Incoming warnings to center node $i$ on a 3-regular graph, where the nodes have not been assigned to groups.
    \textit{(b)} The edge $(x_k,x_i) =$ \vcenteredinclude{\includegraphics[width=0.02\textwidth]{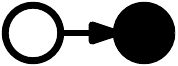}} is not compatible with the incoming warning $w^{k \to i}$.
    \textit{(c)} All membership assignments are compatible with the incoming warnings and the center node $i$ is $3$-disassortative. 
    \textit{(d)} The membership assignments are compatible with the warnings but the center node $i$ is not disassortative for $H=2$.
}
    \label{fig:explain-compatability}
\end{figure}

\paragraph{At $H \leq \lceil\frac{d}{2}\rceil$ the solution space of the disassortative partition problem is frozen 1RSB}
Setting $\eta_0$ to correspond to the small noise reconstruction, i.e. setting a small probability for the ``don't care" survey and the rest of the warnings in the survey to a conclusive assignment, we can probe the fraction of non-frozen variables at the convergent distribution $\eta^*$ as $\eta^*\smat{+}{+}{+}{+}$, as well as the entropy of the warning propagation's fixed points $\Phi_{\rm SP}$.

In running the SP equations \eqref{eq:32} for $d$-regular graphs, we consistently observe that for the disassortative case and for $H \le \lceil\frac{d}{2}\rceil$ only exactly four of the 16 possible warnings have a non-zero probability at convergence.
Two of them are $\smat{-}{-}{\vcenteredinclude{\includegraphics[width=0.02\textwidth]{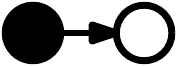}}}{\vcenteredinclude{\includegraphics[width=0.02\textwidth]{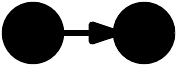}}}$ and $\smat{\vcenteredinclude{\includegraphics[width=0.02\textwidth]{survey-propagation-zero-zero.pdf}}}{\vcenteredinclude{\includegraphics[width=0.02\textwidth]{survey-propagation-zero-one.pdf}}}{-}{-}$ which require the membership of the outgoing node  to be fixed to a specific group but where the membership of the incoming node can be chosen freely.
The other two are $\smat{-}{-}{\vcenteredinclude{\includegraphics[width=0.02\textwidth]{survey-propagation-one-zero.pdf}}}{-}$ and $\smat{-}{\vcenteredinclude{\includegraphics[width=0.02\textwidth]{survey-propagation-zero-one.pdf}}}{-}{-}$ which both require that the edge itself be a specific crossing edge.
Since at convergence there is zero probability of the ``don't care" edge $\eta\smat{+}{+}{+}{+}$ for $H \le \lceil\frac{d}{2}\rceil$, the $H$-disassortative bipartition problem is frozen in this case.
Then, there is also a direct mapping between the RS fixed point $\chi$ and the survey propagation fixed point $\eta$ for a $H$ according to
\begin{align}
    \chi_{\scriptscriptstyle (+1,+1)} = \chi_{\scriptscriptstyle (-1,-1)} &= \frac{\eta\smat{+}{+}{-}{-}}{2 \cdot \eta\smat{+}{+}{-}{-}  +1}\, ,\\
    \chi_{\scriptscriptstyle (-1,+1)} = \chi_{\scriptscriptstyle (+1,-1)} &= 1/2 -\chi_{\scriptscriptstyle (-1,-1)}\, .
\end{align}
Under this mapping, the BP and SP entropy are exactly the same as shown in Table~\ref{tab:results-entropy}.
In this case of frozen 1RSB, we can conclude that the RS solution based on analyzing Belief Propagation is correct.

For $H > \lceil\frac{d}{2}\rceil$ the SP converges to a fixed point where every warning is ``don't care''. 
Therefore the reconstruction is not possible and we are not in the frozen 1RSB.

An example for these two different behaviors for $d=18$ is shown in Figure~\ref{fig:sp-example}, and this is consistent with the general picture for smaller or larger $d$.

By the equivalence between the (dis)assortative partitions at $m=0$, we can conclude that the same holds for the assortative partitions: in this case, the frozen variables\footnote{For assortative partitions, we have that only $\smat{-}{-}{+}{+}$, $\smat{+}{+}{-}{-}$, $\smat{+}{-}{-}{-}$ and $\smat{-}{-}{-}{+}$. The difference here is that instead of the crossing edges for the disassortative partitions, only warnings which are fixed to non-crossing edges appear.} appear at $H > \lceil\frac{d}{2}\rceil$.
This completes the phase diagram for magnetization zero.

\begin{figure}[ht]
\centering
         \includegraphics[width=0.54\textwidth]{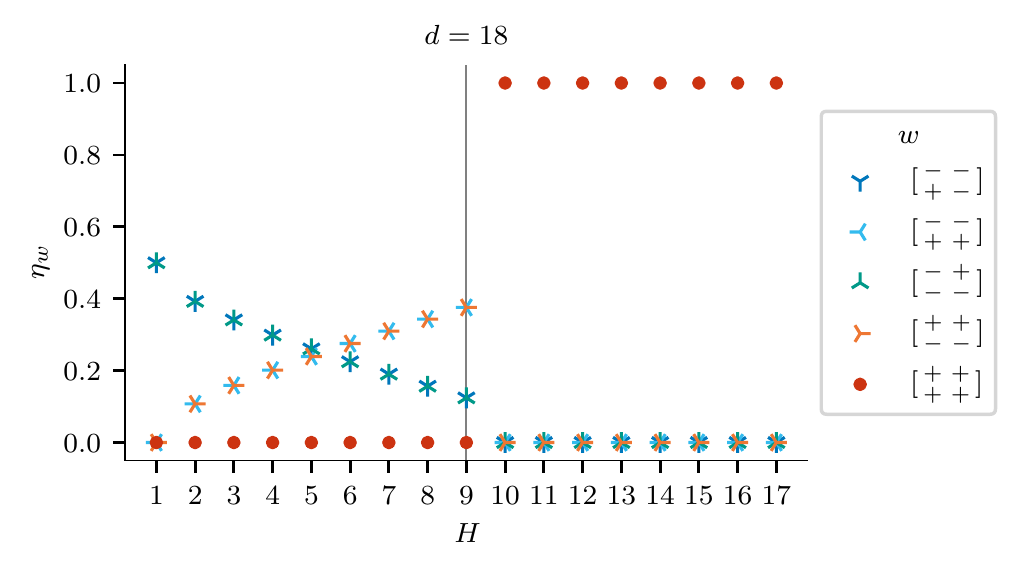}
    
    \caption{Fixed points $\eta_w$ of the survey propagation equations for disassortative partition with $d=18$.
    All warnings $w$ that are not explicitly shown in the plot are zero. 
    Note that the symmetry between the partitions is present in the fixed point: The values of $\eta\smat{-}{-}{+}{-}$ and $\eta\smat{-}{+}{-}{-}$ are equivalent and likewise for  $\eta\smat{-}{-}{+}{+}$ and $\eta\smat{+}{+}{-}{-}$.
    The \textit{don't care} warning (red) indicates the fraction of warnings that are not frozen.
    This clearly shows that the problem is frozen for $H_{\rm dis} \leq 9$, when the probability of such a warning is zero, and otherwise there are no frozen variables at all.
    }
    \label{fig:sp-example}
\end{figure}

\section{Algorithmic Performance at \texorpdfstring{$m=0$}{m=0}}\label{sec:algorithms}

We now take a look at two algorithms that aim to find a solution on a specific instance of the assortative $H$-partition problem at magnetization zero.
To avoid algorithms from converging to the ```all $+1$'' and ``all $-1$'' partitions, we consider only those that keep the magnetization constant during their run by using SWAP operations between the two partitions.

\paragraph{Algorithm EagerSWAP: Swapping random disassortative nodes} The graph is initialized with a random but balanced partition.
For every node it is calculated whether it is assortative or disassortative depending on the parameter $H$, i.e. if the constraint is fulfilled given their neighbors memberships.
Given this information, two diassortative nodes are selected at random, one from group $V_-$ and the other from group~$V_+$.
Then, they become members of the opposite group and all affected nodes in the graph are updated.
These steps are repeated until the number of assortative nodes in the graph no longer fluctuates, or when the maximum number of iterations is reached.

\paragraph{Algorithm GreedySWAP: Swapping most disassortative nodes \cite{ferberFriendlyBisectionsRandom2021a}}

Define the \textit{assortativeness} of node $i$ as the number of neighbors with membership $x_i$ minus the number of neighbors with a different membership:
\begin{align}
    \text{assortativeness}(i) := \sum_{j\in\partial i} \ind(x_i = x_j) - \sum_{j \in \partial i}   \ind(x_i \neq x_j)\, . 
\end{align}
Upon a random and balanced initial partition, the most disassortative node of each partition is selected.
Then, these two nodes swap their groups.
Afterwards, the assortativeness of all effected nodes is updated and the same process repeated.
Analogous to EagerSWAP, the current partition is returned once the assortativeness of the overall nodes is stable or the maximum iterations are reached.

\paragraph{Results}
In both cases, the balance between the partition sizes is always maintained during the iterations.

Note that GreedySWAP was proven to produce with high probability $n-o(n)$ nodes with assortativeness $\geq 0$ for random Erdős–Rényi graphs with degree $\frac{n}{2}$ and $n$ nodes \cite{ferberFriendlyBisectionsRandom2021a}.
This means that for $n-o(n)$ nodes at least half of their neighborhood is in their own group.
This is similar the problem of $\lceil \frac{d}{2} \rceil$-assortative partitions that we cover in this paper, although we do not have a random, but fixed degree $d$ and consider sparse and not dense graphs.

The algorithms are run on randomly generated $d$-regular graphs and the resulting number of disassortative nodes at convergence is recorded. 
If there are no disassortative nodes in this result, a solution to the $H$-assortative graph problem was recovered.
Because the algorithm GreedySWAP does not depend on the parameter $H$, for every instance with degree $d$ it is only run once.
Then, the number of disassortative nodes of the recovered partition is evaluated for different $H$.

The results of several runs of these algorithms are shown in Table~\ref{tab:results-entropy}.
GreedySWAP finds satisfying partitions for $H \leq d/2$ more reliably than EagerSWAP, which sometimes leaves a small number of nodes disassortative.
For odd $d$, where $H = \lceil d/2 \rceil$ both algorithms successfully find partitions that leave only a small number of nodes disassortative.
In the hard phase where we have a negative $s(0)$, both algorithms fail to find good solutions.
In the frozen phase, the algorithms both do not find completely satisfying solutions, but EagerSWAP finds solutions that have more assortative nodes than GreedySWAP.

Since in Table~\ref{tab:results-entropy} we only give results for fixed $n=10,000$, we also check finite-size effects in Figure~\ref{fig:finite-size}.
For the example of $d=8$, the percentage of disassortative nodes scales as a power law when $H=4$, which is in line with our prediction that in the easy phase the algorithm should be able to find solutions when $n \to \infty$.
In contrast, the percentage of disassortative nodes plateaus in the frozen-1RSB phase when $H=5$, which corroborates our prediction that in this setting solutions are difficult to find as the percentage plateaus despite the growth of $n$. 

Additionally, both algorithms can be adapted equivalently for the disassortative bipartition.
For EagerSWAP, the criterion changes to swapping random assortative nodes.
Similarly, for GreedySWAP, instead of swapping the most disassortative nodes, the most assortative ones are swapped.
Under these adaptations we recover the same behavior as for the assortative case.
In the disassortative easy phase we find such partitions easily whereas in the the hard phase the solutions are not found.

\begin{figure}[hbt]
    \centering
         \includegraphics[width=0.49\textwidth]{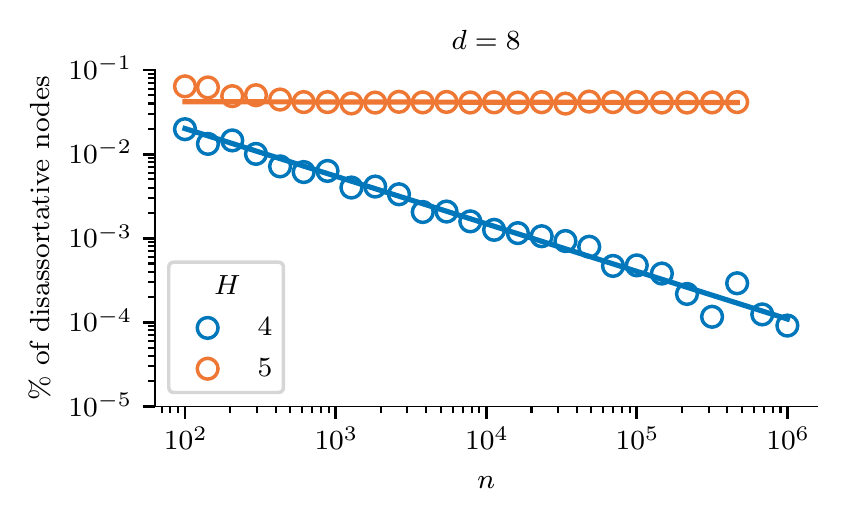}
         
    \caption{Mean remaining percentage of disassortative (unsatisfied) nodes in partitions found by EagerSWAP for $d=8$ for an increasing graph size $n$.
  In the predicted easy phase at $H=4$ this percentage decreases as a power law and the algorithm finds increasingly good solutions as $n \to \infty$.
    In the frozen phase at $H=5$, which should be difficult for local search algorithms,  the percentage of disassortative nodes plateaus which is in line with our prediction.
    From $n=10^2$ to $n=10^5$ the results are averaged over 50 runs, for larger instances the experiment was repeated 10 times. The maximum number of iterations was $10\cdot n$. }
    \label{fig:finite-size}
\end{figure}

\section{Acknowledgements}

We thank Afonso Bandeira for bringing this problem to our attention. 
This work started as a part of the doctoral course Statistical Physics For Optimization and Learning taught at EPFL in spring 2021.
We thank Markus Müller, Martin Loebl, and Florent Krzakala for discussions and pointers to some of the existing literature related to this work. 
We acknowledge funding from the ERC under the European Union’s Horizon 2020
Research and Innovation Programme Grant Agreement 714608-SMiLe.

\newpage

\printbibliography
\newpage
\appendix
\section{Appendix}


\subsection{Derivation of large \texorpdfstring{$d$}{d} asymptotics}
\begin{figure}[b]
    \centering
    \includegraphics{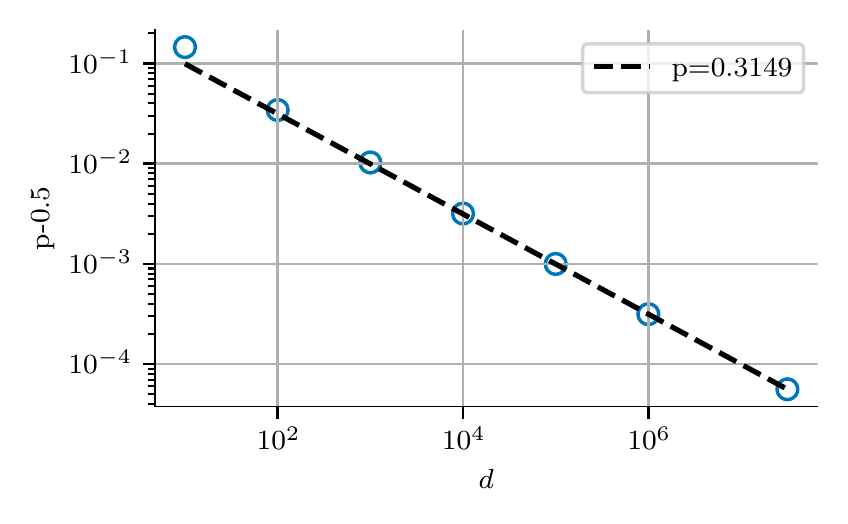}
    \caption{Numerical results for the scaling of the fixed point $p$ as $d$ grows larger.}
    \label{fig:fixed-poin-p}
\end{figure}

We can simplify the fixed point equation \eqref{eq:dreg-fixed-point} further 
taking into account the symmetry between the two partitions for magnetization $m=0$
\begin{align}
    a = \chi_{0,0} &= \chi_{1,1}\, ,\\
    b = \chi_{0,1} &= \chi_{1,0}\, .
\end{align}
We also know from the fact that  $\chi$ is a probability distribution, that we have $ ||\chi||_1 = 1$ and $\chi_{0,0}, \chi_{1,1}, \chi_{0,1},  \chi_{1,0} \geq 0$ so we have
\begin{align}
    b &= 0.5 - a\, .
\end{align}
Defining $p = 2a$, we get a simplified fixed point equation in terms of the pdf ($A_1$) and cdf ($A_2$) of a Binomial distribution with $d-1$ trials and success probability $p$
\begin{align}
    A_1 &= \sum_{i=\minH}^{d-1} \binom{d-1}{ i} p^i (1-p)^{d-1-i}\, , \label{eq:a1}\\
    A_2 &=  \binom{d-1}{ \minH - 1} p^{\minH - 1} (1-p)^{d-\minH}\, , \label{eq:a2}\\
    p &=  \frac{A_1+A_2}{2 A_1 + A_2}  = \frac{1 + \frac{A_1}{A_2}}{1 + 2 \frac{A_1}{A_2}}\, .\label{eq:finalfixedddd}
\end{align}
Using the same variable definitions, we find an expression for the entropy from  \eqref{eq:Entropy-RS} that reads
\begin{align}
    \Phi_{\text{RS}} = \log\Big[\sum_{r=H-1}^d \binom{d}{ r} p^r (1-p)^{(d-r)}\Big] + (1 -d ) \log(2) - \frac{d}{2} \log\Big[2((\frac{p}{2})^2+(\frac{1-p}{p})^2)\Big]\, . \label{eq:finalent}
\end{align}
We now make two assumptions that come from the numerical evaluation of \eqref{eq:dreg-fixed-point}, see Figure \ref{fig:large-d} and \ref{fig:fixed-poin-p}.
We found that the relevant scaling for the first order occurs at $\sqrt{d}$, so we assume
\begin{align}
p &= \frac{1}{2} + c \sqrt{\frac{1}{d}}\, ,\label{eq:as1} \\
    \minH &= \frac{d}{2} + h \sqrt{d}\, ,\label{eq:as2}
\end{align}
where $c,h \in \mathbb{R}$. 
In the following we use $\bar{d} = d-1$ to lighten the notation slightly.

For our derivation of the first order expansion we use Stirling's approximation for the binomial coefficients in \eqref{eq:finalent} and \eqref{eq:a2}.
For the median $k$ of a discrete probability distribution it holds that $\text{Prob}(X \leq k) \leq 1/2$ and $\text{Prob}(X \geq k) \leq 1/2$.
We use the fact that for a Binomial distribution  $B(n=\bar{d},p)$, the value $\bar{d}p$ is a good approximation of the median $k$.
Therefore, the cumulative distribution function $P(K \geq p\bar{d}) = \sum_{i=p\bar{d}}^{\bar{d}} \binom{\bar{d}}{ i} p^i (1-p)^{\bar{d}-i} \sim 1/2$. 
Also we empirically observe, that $p\bar{d} > h$, so we split the sum in $A_1$ in  \eqref{eq:a1} at $p\bar{d}$. 
\begin{align}
    A_1 &= \sum_{r=H}^{\bar{d}} \binom{\bar{d}}{ r} p^r(1-p)^{\bar{d}-r} \\
    &\sim \frac{1}{2} + \sum_{r=H}^{\bar{d}p} \binom{\bar{d}}{ r} p^r (1-p)^{\bar{d}-r}\\
    \intertext{we use Stirling's approximation to approximate the binomial coefficient}
    &\sim \frac{1}{2} + \sum_{r=H}^{\bar{d}p} \sqrt{\frac{n}{2 \pi r (\bar{d}-r)}} \frac{\bar{d}^{\bar{d}}}{r^r (\bar{d}-r)^{(\bar{d}-r)}} p^r (1-p)^{\bar{d}-r}\, .\\
    \intertext{We substitute the sum for an integral as}
    &\sim \frac{1}{2} + \int_{H}^{p\bar{d}}  \sqrt{\frac{n}{2 \pi r (\bar{d}-r)}} \frac{\bar{d}^{\bar{d}}}{r^r (\bar{d}-r)^{(\bar{d}-r)}} p^r (1-p)^{\bar{d}-r} dr\\\intertext{and insert our assumptions on the scaling of $p$ and $H$ from \eqref{eq:as1} and \eqref{eq:as2}}
    &\sim \frac{1}{2} + \int_{d/2+h \sqrt{d}}^{d/2+c \sqrt{d}}  \sqrt{\frac{n}{2 \pi r (\bar{d}-r)}} \frac{\bar{d}^{\bar{d}}}{r^r (\bar{d}-r)^{(\bar{d}-r)}} p^r (1-p)^{\bar{d}-r} dr\\
    \intertext{putting the boundaries of the integrals inside we obtain}
    &= \frac{1}{2} + \int_{h}^{c} 2^{d+\frac{1}{2}} \bar{d}^{\bar{d}} d \left(-2 \sqrt{d} r+d-2\right)^{\sqrt{d} r-\frac{d}{2}+1} \left(2 \sqrt{d} r+d\right)^{-\sqrt{d} r-\frac{d}{2}}\\
    & \sqrt{\frac{d-1}{-4 \pi d r^2-4 \pi \sqrt{d} r+\pi (d-2) d}}  \left(\frac{1}{2}-\frac{c}{\sqrt{d}}\right)^{\frac{1}{2} \left(d-2 \sqrt{d} r\right)} \left(\frac{c}{\sqrt{d}}+\frac{1}{2}\right)^{\sqrt{d} r+\frac{d}{2}}(\sqrt{d}-2 c)^{-1} dr \, , \\
    \intertext{Reordering terms and expanding the formulas to power series at $d = \infty$ we get}
    &\sim \frac{1}{2} + \int_{h}^{c} \Big[e^{-2 (r-c)^2} \sqrt{\frac{2}{\pi}} + 2 e^{-2 (r-c)^2} \sqrt{\frac{2}{\pi}} (cp-r) \sqrt{\frac{1}{d}}+ O(1/d)\Big] dr \\
    &= \frac{1}{2} + \frac{1-e^{-2 (r-c)^2}}{\sqrt{2 \pi}} \sqrt{\frac{1}{d}} + \frac{1}{2} \text{Erf}[\sqrt{2} (c-h)] + O(1/d)\, .
\end{align}
For $A_2$ we do not need to integrate, but only use Stirling and do the expansion at $d = \infty$
\begin{align}
    A_2 &= \binom{d-1}{ H - 1} p^{H - 1} (1-p)^{d-H}\\
    &\sim \sqrt{\frac{\bar{d}}{2 \pi (H-1) (\bar{d}-H+1)}} \frac{\bar{d}^{\bar{d}}}{(H-1)^{(H-1)} (\bar{d}-H+1)^{(\bar{d}-H+1)}}p^{H - 1} (1-p)^{d-H} \\
    &\sim\sqrt{\frac{2}{\pi }} e^{-2 (h-c)^2}  \sqrt{\frac{1}{d}}+\frac{2 \sqrt{\frac{2}{\pi }} (h-c) e^{-2 (h-c)^2}}{d}+O((\frac{1}{d})^{3/2})\, .
\end{align}
For the ratio we then get
\begin{align}
    \frac{A_1}{A_2} = \frac{1}{2} \sqrt{\frac{\pi }{2}}  e^{2 (h-c)^2} \text{Erfc}(\sqrt{2} (h-c)) \sqrt{d} + O(\sqrt{\frac{1}{d}})\, .
\end{align}
Inserting this into \eqref{eq:finalfixedddd}, we get the following for $p$
\begin{align}
    p = 1/2 + \frac{e^{-2(h-c)^2}}{\sqrt{2 \pi} \text{Erfc}[\sqrt{2}(h-c)]}\sqrt{\frac{1}{d}} + O(1/d)
\end{align}
so given the assumption \eqref{eq:as1} the factor $c$ is exactly
\begin{align}
    c =  \frac{e^{-2(h-c)^2}}{\sqrt{2 \pi} \text{Erfc}[\sqrt{2}(h-c)]}\, . \label{eq:ceq}
\end{align}
For the entropy we use \eqref{eq:finalent}. For the first term in its sum the approximation can be found analogous to the operations used before to be
\begin{align}
    \log(1/2 - 1/2 \text{Erf}[\sqrt{2} (h-c)]) + \frac{\sqrt{2/\pi}(-1+e^{2(h-c)^2}) e^{-2(h-c)^2}}{\text{Erfc}(\sqrt{2}(h-c))} \sqrt{\frac{1}{d}} + O(1/d)\, .
\end{align}
Finally, the last term can be approximated as
\begin{align}
    d\log(2) - 2 c^2 + 4 c^4 \frac{1}{d} + O(1/d)^{2/3}\, .
\end{align}
Putting this together we get
\begin{align}
    \Phi_{RS} = (-2 c^2+\log(\text{Erfc}(\sqrt{2}(h-c))) +\frac{\sqrt{2/\pi}(-1+e^{2(h-c)^2}) e^{-2(h-c)^2}}{\text{Erfc}(\sqrt{2}(h-c))}  \sqrt{\frac{1}{d}} + O(1/d)\, . \label{eq:lastform}
\end{align}
Using \eqref{eq:ceq} together with \eqref{eq:lastform}, a solution of those equations can be found given that either $h$ or $\Phi_{RS}$ is fixed using numerical methods.

\end{document}